\newtheorem{assumption} {Assumption}
\newtheorem{theorem}{\textbf{\textsc{Theorem}}}
\begin{document}
\bibliographystyle{IEEE2}

\title{On Cyber Risk Management of Blockchain Networks: A Game Theoretic Approach}

\author{Shaohan Feng,~\IEEEmembership{Student Member,~IEEE}, Wenbo Wang,~\IEEEmembership{Member,~IEEE}, Zehui Xiong,~\IEEEmembership{Student Member,~IEEE}, Dusit Niyato,~\IEEEmembership{Fellow,~IEEE}, Ping Wang,~\IEEEmembership{Senior Member,~IEEE} and Shaun Shuxun Wang}

\IEEEtitleabstractindextext{
\begin{abstract}
Open-access blockchains based on proof-of-work protocols have gained tremendous popularity for their capabilities of providing decentralized tamper-proof ledgers and platforms for data-driven autonomous organization. Nevertheless, the proof-of-work based consensus protocols are vulnerable to cyber-attacks such as double-spending. In this paper, we propose a novel approach of cyber risk management for blockchain-based service. In particular, we adopt the cyber-insurance as an economic tool for neutralizing cyber risks due to attacks in blockchain networks. We consider a blockchain service market, which is composed of the infrastructure provider, the blockchain provider, the cyber-insurer, and the users. The blockchain provider purchases from the infrastructure provider, e.g., a cloud, the computing resources to maintain the blockchain consensus, and then offers blockchain services to the users. The blockchain provider strategizes its investment in the infrastructure and the service price charged to the users, in order to improve the security of the blockchain and thus optimize its profit. Meanwhile, the blockchain provider also purchases a cyber-insurance from the cyber-insurer to protect itself from the potential damage due to the attacks. In return, the cyber-insurer adjusts the insurance premium according to the perceived risk level of the blockchain service. Based on the assumption of rationality for the market entities, we model the interaction among the blockchain provider, the users, and the cyber-insurer as a two-level Stackelberg game. Namely, the blockchain provider and the cyber-insurer lead to set their pricing/investment strategies, and then the users follow to determine their demand of the blockchain service. Specifically, we consider the scenario of double-spending attacks and provide a series of analytical results about the Stackelberg equilibrium in the market game.
\end{abstract}

\begin{IEEEkeywords}
Blockchain service, mining, attack, double-spending, cyber-insurance, game theory.
\end{IEEEkeywords}}

\maketitle

\IEEEdisplaynontitleabstractindextext

\IEEEpeerreviewmaketitle

\IEEEraisesectionheading{\section{Introduction}
\label{sec:introduction}}

\IEEEPARstart{I}{n} the past few years,  blockchain technologies have attracted tremendous attention from both industry and academia for distributively providing the irreversible, tamper-evident database of tokenized asset transactions~\cite{dinh2017untangling}. Furthermore, with the smart contracts~\cite{bartoletti2017empirical} enabled on top of the decentralized consensus~\cite{dinh2017untangling}, blockchains are envisaged to be the ``game changer'' in various areas ranging from Peer-to-Peer (P2P) resource allocation/trading, e.g., distributed cloud storage~\cite{7966965}, to financial services, e.g., digital identity management~\cite{7961146} and online markets for crowdsourcing services~\cite{8332496}. Although with the advantages such as open access, disintermediation, and pure self-organization, open-access/permissionless blockchains rely on the condition of honest majority to guarantee the data integrity and service security, especially when the Nakamoto consensus protocol based on proof-of-work (PoW) is adopted~\cite{dinh2017untangling}. Since permissionless blockchain networks admit no identity control, they can be vulnerable to a series of insider attacks by malicious consensus nodes~\cite{conti2017survey}. Among different types of attacks, double spending~\cite{Karame:2012:DFP:2382196.2382292} is the most fundamental one and can be executed through various attacks such as goldfinger attacks, netsplit attacks and brute-force attacks~\cite{conti2017survey}. In brief, a double-spending attacker attempts to simultaneously spend the same set of blockchain tokens in two different transactions. This can be performed by first persuading part of the network and the transaction receiver to confirm one transaction, and then persuading the majority of the network to override that transaction with a conflicting transaction spending the same set of tokens. In other words, double-spending attacks are executed through intentional blockchain forking. Due to the factors such as randomness in solving the PoW puzzles~\cite{dinh2017untangling} and information propagation delay, the malicious nodes, i.e., attackers, only need to hold a certain level of PoW computing power to succeed with a high probability in the double-spending attacks. Note here that although the double-spending attack is initially devised for Bitcoin, the attack is also applicable to other blockchain-based resource trading services and systems, for example, energy trading~\cite{8234700}, plug-in hybrid electric vehicle (PHEV) charging credit management~\cite{8306865}, wireless spectrum trading~\cite{7943523}, bandwidth exchange in community networks~\cite{8311658} and cache storage trading~\cite{wenbo}.

Although a few approaches, e.g.,~\cite{Muhammad2018Countering}, have been introduced in blockchains to deter and prevent attacks, due to the inherent characteristics of openness, the PoW-based permissionless blockchain networks may not be completely secure. This critically hinders the broader adoption of permissionless blockchains, especially in business services that require high-level service security. Along with the studies on the improvement of blockchain protocols, blockchain service providers are also looking for an alternative means of cyber-risk management. Recently, cyber-insurance has been recognized as a promising approach to efficiently manage the cyber risks by transferring them to insurers~\cite{pal2013way}. Similar to the traditional insurance, the customer of a cyber-insurance product, i.e., a policyholder, is insured once it settles the contract with the insurer by paying a premium. If attacks happen and the damage is within the coverage of the insurance policy, the insurer will pay the claim to the customer accordingly. To date, a number of cyber-insurance products have been made available in the market~\cite{MAROTTA201735}. According to the types of target systems, these products can be categorized into specific groups designed for service providers such as ISPs and clouds, single mobile/work stations, networks of devices and dedicated cyber-physical/industrial systems.

\begin{figure}[!]
 \centering
 \includegraphics[width=0.45\textwidth,trim=80 15 90 15,clip]{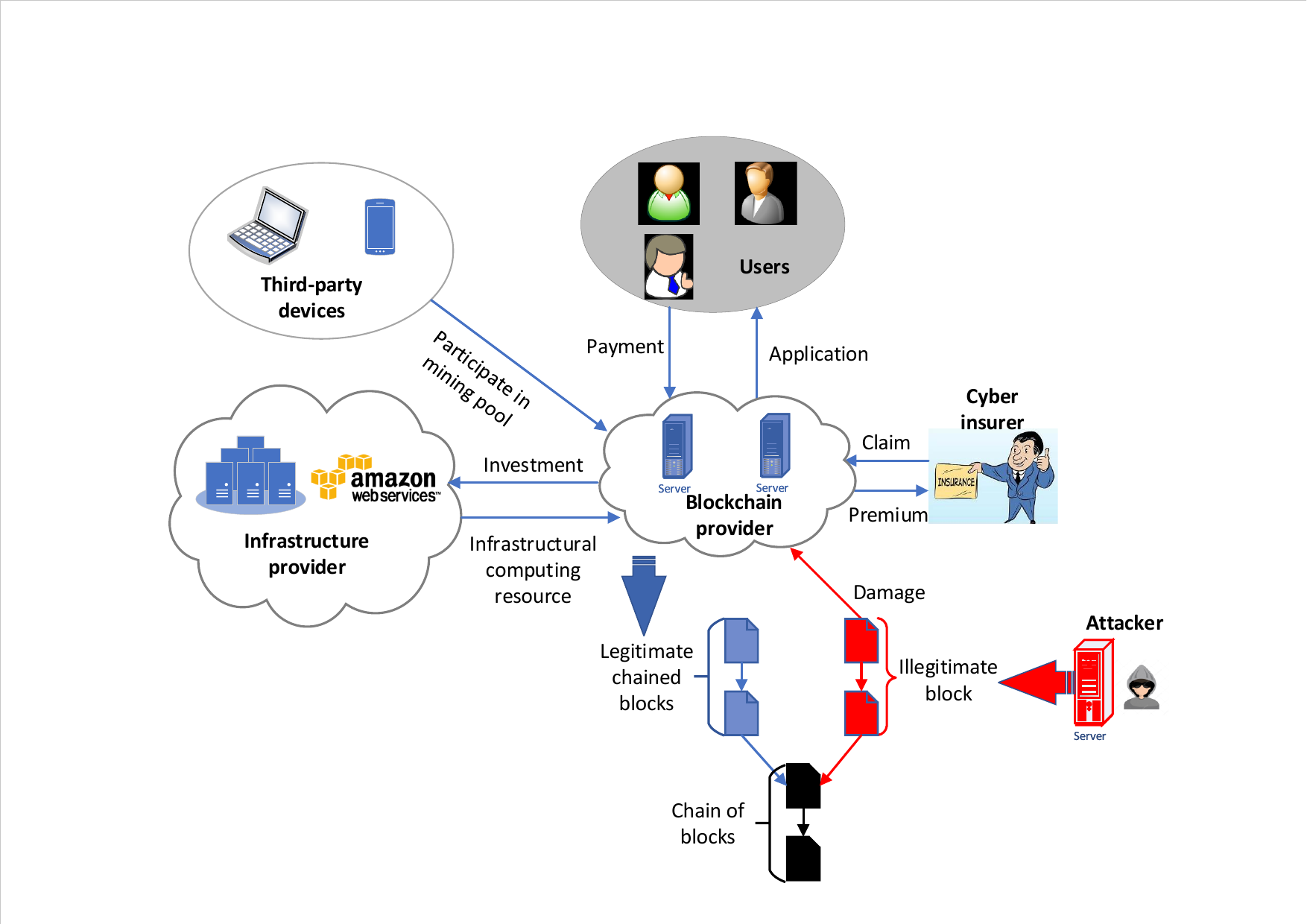}
 \caption{An overview of the blockchain service market.}
 \label{fig:system_model}
\end{figure}

In this paper, we introduce a novel approach of jointly providing the risk management and security enhancement to the blockchain users and providers against attacks through the means of the cyber-insurance. As in other insurable cyber-systems, the market design of cyber-insurance for blockchain networks also has to address a few important issues. Firstly, from the cyber-insurer's perspective, the scope and policy of the cyber-insurance have to be clearly defined in regard to what kind of attacks to be covered and how to quantify the risk, the possible damage and thus the insurance premium. Secondly, alongside the reactive risk transfer with the cyber-insurance, rational blockchain providers also have to consider the proactive strategy in security improvement and thus balance the investment in the infrastructure and in the cyber-insurance.

To answer these questions, we consider a PoW-based blockchain service market under the threat of double-spending attacks (see Fig.~\ref{fig:system_model}). The market is composed of three entities, namely, the users, the blockchain provider, and the cyber-insurer. The users subscribe to a service, e.g., P2P energy trading for smart grids, which is implemented on top of the blockchain provided by the blockchain provider. We consider that the blockchain provider is composed of a group of individual honest machines which are responsible for maintaining the data consensus in the framework of PoW-based permissionless blockchains. The blockchain provider purchases the computing resource from cloud-based infrastructure providers\footnote{For example, Amazon AWS provides the blockchain infrastructures through its partner ecosystem~\cite{AWS2018}.} or deploys more computing power internally for maintaining the network consensus. Meanwhile, in order to lower the cost on infrastructure, third-party devices are encouraged to join the decentralized consensus process by dedicating their individual computing resources into the network. Working as a single blockchain provider, the group of consensus machines make profit by charging the users with the transaction processing fees and block mining fees~\cite{dinh2017untangling}. To neutralize the economic/financial loss incurred by double-spending attacks, the blockchain provider purchases the insurance from the cyber-insurer, which adopts an adjustable premium pricing strategy according to its perceived risk level of the blockchain.

We propose a two-stage Stackelberg game model to analyze the dynamics of the considered market. On the upper stage of the game, the blockchain provider and the cyber-insurer lead to adopt their best-response strategies for profit maximization. On the lower stage, the users adjust their service demands according to the cost and the security level of the blockchain. More specifically, the major contributions of this paper are summarized as follows:
\begin{enumerate}
\item We formulate the mechanisms of blockchain service pricing, blockchain infrastructure investment and cyber-insurance premium adjustment as a joint market equilibrium problem. We model the interactions among the three parties in the market as a two-level Stackelberg game. We provide and prove a few important theoretical discoveries regarding the properties of the equilibrium in the market game.
\item We incorporate the social externality~\cite{gong2017social,xiong2017economic} among blockchain users in our end-user utility model. Also, by modeling the impact of computing power on the blockchain security, we consider the blockchain provider's strategy to incorporate both dimensions of infrastructure investment and insurance spendings. Furthermore, by adopting the concept of ``risk-adjusted premium'', we mathematically capture the impact of attack probability on premium pricing from the cyber-insurer's perspective.
\item We conduct extensive evaluation to assess the performance of the three parties with their equilibrium strategies at different levels of the attacker-controlled computing power.
\end{enumerate}

The proposed market framework introduces a novel incentive-compatible business ecosystem, where the blockchain service users benefit from enjoying more resilient services, and both the blockchain provider and the cyber-insurer are able to gain more profits. In fact, the potential of cyber-insurance for blockchain, Bitcoin specifically, has been perceived in the market. For example, Petra Insurance Brokers (www.insurewithpetra.com) introduces the concept of insurance for Bitcoin transactions. Likewise, BitCoin Financial Group unveils ``BitSecure'', which is a Bitcoin theft insurance policy (www.bitcoinfinancialgroup.com). This product covers both external hacking and employee theft. Moreover, more insurance products are emerging in which a few of them focus on general blockchain services. Therefore, our proposed concept and framework of cyber insurance for blockchain services has a clear and direct practical implication. Moreover, from the perspective of market design, the blockchain services are typically highly customized for a certain environment and application. Thus, we consider the mutual trading between single insurer and single provider. The model of a many-to-many market for multiple insurers and providers is likely to be intractable and needs further development, and hence it is beyond the scope of this paper. Finally, the model developed in this paper can be readily extended to blockchains with the emerging consensus protocols based on the generalized proof of concepts~\cite{dinh2017untangling}, where investment in other resources, e.g., stakes, is needed to prevent attacks.

The rest of the paper is organized as follows. Section~\ref{sec:related_works} presents the related work. Section~\ref{sec:system} describes the preliminaries about blockchains and cyber-insurance, the system model, and the formulation of Stackelberg game. Section~\ref{sec:equilibrium_analysis} investigates the existence and uniqueness of the equilibrium in the proposed Stackelberg game under practical assumptions. Section~\ref{sec:performance} presents the numerical performance evaluation. Section~\ref{sec:conclusion} concludes the paper.

\section{Related Works}
\label{sec:related_works}

The permissionless blockchain network was originally conceptualized in the famous grassroot cryptocurrency project ``Bitcoin'' as a decentralized database for tamper-proof recording and trusted timestamping for the transactional data between P2P users. Permissionless blockchains have been widely recognized for the superb consensus scalability, the tamper-evidence data organization and the capability of supporting the distributed, general-purpose virtual machines~\cite{dinh2017untangling,7423672}. For this reason, in recent years there has seen a plethora of emerging application based on blockchains such as Internet finance and property digitization~\cite{zhao2016overview}, self-organization for Internet of Things~\cite{christidis2016blockchains} and other nonfinancial applications, e.g., notary documents and anti-counterfeit solutions~\cite{wang2016maturity}.

Permissionless blockchain networks creatively solve the problems of replicated consensus in open-access networks by introducing the financial incentive and cryptical zero-knowledge proof into the consensus process~\cite{dinh2017untangling}. More specifically, any node in the blockchain network is allowed to issue digitally signed transactions to other nodes by ``broadcasting'' the transactions in a gossip manner over the P2P links between the nodes. The consensus nodes pack up an arbitrary subset of unapproved transactions into a cryptographically protected data structure, i.e., the ``block'', and rival with each other in a block publishing race (also known as block mining) to acquire the block mining reward~\cite{Garay2015}. Following the Nakamoto protocol for blockchain state maintenance~\cite{dinh2017untangling}, the consensus nodes only accept the longest chain among all the locally observable candidates of the blockchain state as their canonical view of the blockchain. To determine the winner of a block mining race, the consensus nodes have to perform an exhaustive search for the solutions of the crypto-puzzles built upon their proposed blocks. In other words, the nodes that acquire more computing power will have higher probability of winning the race and hence the more power of controlling the blockchain state~\cite{dinh2017untangling}. Therefore, the honest consensus nodes have to secure a sufficiently large amount of computing power to guarantee the well-being of the blockchain services, e.g., data integrity~\cite{Garay2015}. Grabbing more computing power, on the other hand, is also from which the malicious nodes start to breach the blockchain networks.

Cyber-insurance, in the meanwhile, has been recognized as an innovative tool to manage the cyber risks and alleviate the damage of cyber-attacks for the insured customers~\cite{feng2018joint}. Cyber-insurance provides the coverage on losses and liabilities from network/information security breaches. This greatly incentivizes the security investments by cyber-physical systems. However, compared with classical insurance, cyber-insurance introduces a number of unique issues. For example, due to the interdependence of security systems or lack of statistical data, it is difficult to assess the systems' vulnerability and hence hard to estimate the risk transferred to the cyber-insurer~\cite{MAROTTA201735}. For the transaction-oriented cyber services built upon permissionless blokchains, the designer of cyber-insurance also faces the similar issues. Nevertheless, recent studies on the mechanisms of double-spending attacks~\cite{eyal2016bitcoin,rosenfeld2014analysis,grunspan2017double} have shed light upon the possible approaches in analytically assessing the risks of this fundamental threat on the blockchain systems. For example, in~\cite{eyal2016bitcoin},
the authors proposed a new protocol which requires the consensus nodes, i.e., block miners, to confirm transactions only if the inputs of the transactions have not been spent, hence preventing users from double-spending their funds. Based on the characteristics of intentional forking in double-spending attacks, the authors in~\cite{rosenfeld2014analysis,grunspan2017double} introduced the methods to estimate the probability of successful double-spending attacks by analyzing the investments in computing resource of both the blockchain maintainers and attackers. With these studies, it is now possible to estimate the probability of successful double spending and evaluate the potential risks transferred to the cyber-insurers.

Furthermore, under the condition that the probability distribution of risk can be estimated, the authors in~\cite{wang1995insurance} proposed a risk-adjusted premium for pricing risks based on the Proportional Hazard (PH) transform, namely, a power transform in the decumulative distribution function of the risks. Since the PH transform satisfies the elementary principles of assigning premiums, i.e., scale-invariance and translation-invariance, it has been adopted in a number of works concerning the premium determination. In addition, a class of premium functions which are comonotonicity additive and stochastic dominance preservative were studied in~\cite{wang1996premium}. Therein, the premium determination method based on the PH transform was generalized with an axiomatic approach, and the principles such as the absolute deviation principle~\cite{denneberg1990premium} were thoroughly studied and compared.

However, to the best of our knowledge, it still remains an open field of research to employ the cyber-insurance for alleviating the damage of double-spending attack for the blockchain providers. The aforementioned works inspire a vision of quantifying the risk of double-spending attacks based on the computing resources owned by the blockchain maintainers and attackers. Then, the corresponding insurance premium can be determined under the framework based on the PH transform.  By transferring the risks caused by double spending to the cyber-insurers, the cyber-physical services residing on the blockchains can be much more robust. This is also the major objective of our studies.

\section{System Description and Game Formulation}
\label{sec:system}

In this section, we first introduce the model of successful attack probability for double spending and the concept of risk-adjusted premium. After formulating the utilities of the blockchain provider, the users, and the cyber-insurer, we investigate the problem of users' service demand, service pricing and infrastructure, i.e., computing power, investment by the blockchain provider and premium pricing by the cyber-insurer jointly as a hierarchical market game.

The cyber-insurance for blockchain-based service under our consideration works as follows. The blockchain provider firsts pay the premium determined by the cyber-insurer. If the double-spending attack happens, which is detected by the provider or users, the provider files the claims to the cyber-insurer. The cyber-insurer verifies the claim and makes the payment to the provider to compensate for the damage of the double spending. The methods to detect and verify the double spending attack are available and can be adopted, e.g., by deploying an observer as in~\cite{karame2012double}, and hence they are not the focus of this paper.

\subsection{Preliminaries}
\label{subsec:preliminaries}

\subsubsection{Successful Attack Probability}
\label{subsubsec:attack_probability}

We consider that the honest consensus nodes work jointly as a single blockchain provider and are responsible for maintaining a permissionless, PoW-based blockchain for service provision. Extending the analysis in~\cite{rosenfeld2014analysis} and~\cite{grunspan2017double}, we assume that during a time period of $T$, the total computing resource of the blockchain network measured by the hash rate is fixed as $H$. Following the Nakamoto consensus protocol, every consensus node runs an independent Poisson process for puzzle-solving. The average time for a new block to be mined in the blockchain network is $T_{\rm{0}}$~\cite{dinh2017untangling}. Then, in the time period of $T$, the expected number of blocks being successfully mined in the network is $\frac{T}{T_{\rm{0}}}$. Let $h$ denote the investment in computing resources by the blockchain provider, i.e., the honest nodes, and  $a$ denote the investment in computing resources by the attackers. Then, if the computing efficiency for hash queries are roughly the same, the blockchain provider and the attackers divide the total computing resource $H$ as ${\bar h}H$ and ${\bar a}H$, respectively, where $\bar h = \frac{h}{{a + h}}$ and $\bar a = \frac{a}{{a + h}}$ are the investment ratios. According to the probabilistic model for winning the PoW-based puzzle solving race~\cite{rosenfeld2014analysis}, the number of blocks that are mined by the blockchain provider and waiting for confirmation during $T$ is $ {\frac{T}{{{T_{\rm{0}}}}}\frac{{{\bar{h}}H}}{{H}}} = \frac{T}{{{T_{\rm{0}}}}}{\bar{h}}$. On the other hand, instead of following the Poisson distribution based model, the number of blocks successfully mined by attackers during $T$ can be accurately modeled as a negative binomial variable~\cite{rosenfeld2014analysis}. Therefore, with the investment ratio ${\bar{h}}$, the probability for attackers to succeed in double spending during $T$ can be expressed as follows (see Theorem~1 in~\cite{grunspan2017double}):
\begin{equation}\label{Eq:attack_probability}
\footnotesize{{\rm{P}}\left( {\overline h } \right) = {I_{4\left( {1 - \bar h} \right)\bar h}}\left( {\frac{T}{{{T_{\rm{0}}}}}\bar h,\frac{1}{2}} \right), {\bar{h}} \ge \frac{1}{2},}
\end{equation}
where ${I_w}\left( {u,v} \right)$ is the regularized incomplete Beta function:
\begin{equation}
\footnotesize{{I_w}\left( {u,v} \right) = \frac{{\Gamma \left( {u + v} \right)}}{{\Gamma \left( u \right)\Gamma \left( v \right)}}\int\nolimits_0^w {{t^{u - 1}}{{\left( {1 - t} \right)}^{v - 1}}} {\rm{d}}t}
\end{equation}
with $\Gamma\left(\cdot\right)$ being the gamma function. The model of exponential decay in~(\ref{Eq:attack_probability}) is discovered in~\cite{nakamoto2009bitcoin} and proved in~\cite{grunspan2017double}.

We consider that the blockchain provider receives payments from the users in the form of transaction fees in a confirmed block. Under double-spending attacks, the blockchain provider has to compensate the loss of the users with a fixed rate for each transaction in the block that is finally overridden. Assume that the number of transactions included in each block is the same, and hence the transaction fee and compensation rate are fixed for each transaction. Let $N_{\rm{T}}$ denote the number of transactions in a block, $r$ denote the block mining reward for each block and $q$ denote the total compensation rate for each block. Then, with the investment ratio ${\bar{h}}$, the blockchain provider's potential loss is $\frac{T}{{{T_{\rm{0}}}}}{\bar{h}} N_{\rm{T}} q$ under the double-spending attack, and the probability of successful attack is:
\begin{equation}\label{Eq:attack_probability_model}
\footnotesize{{\rm{P}}\left( {\bar h} \right) = \left\{ {\begin{aligned}
&{I_{4\left( {1 - \bar h} \right)\bar h}}\left( {\frac{T}{{{T_{\rm{0}}}}}\bar h,\frac{1}{2}} \right),&\bar h \ge \frac{1}{2},\\
&1,&\bar h < \frac{1}{2}.
\end{aligned}} \right.}
\end{equation}
Here, we only focus on the case where the investment ratio of the blockchain provider is no less than ${1}/{2}$. The reason is that when ${\bar{h}}<{1}/{2}$, the probability of successful double spending is always equal to $1$ as shown in~(\ref{Eq:attack_probability_model}), which is trivial and thus not our focus.

\subsubsection{Premium Determination}
\label{subsubsec:premium_determination}

The cyber-insurer offers a cyber-insurance service to the blockchain provider and covers its total loss when the attack happens. In other words, after the blockchain provider buys the cyber-insurance, the risk of double-spending attack will be transferred to the cyber-insurer. By adopting the concept of risk-adjusted premium in~\cite{wang1996premium}, the cyber-insurer dynamically determines the price, i.e., premium, of its cyber-insurance product according to the insurance risk distribution. According to our previous discussion, the cyber-insurer has an insurance risk, i.e., paying the claim of $\frac{T}{{{T_{\rm{0}}}}}{\bar{h}} N_{\rm{T}} q$ with the probability of ${\rm{P}}\left( {\bar h} \right)$ given by (\ref{Eq:attack_probability_model}). Then, the expected loss for the cyber-insurer can be formulated as follows:
\begin{equation}\label{Eq:expected_claim_insurer}
\footnotesize{\begin{aligned}
{{\rm{E}}_{loss}} &= \int\nolimits_{1/2}^1 {\frac{T}{{{T_{\rm{0}}}}}\bar h{N_{\rm{T}}}q{\rm{P}}\left( {\bar h} \right){\rm{d}}\bar h}  = \frac{T}{{{T_{\rm{0}}}}}{N_{\rm{T}}}q\int\nolimits_{1/2}^1 {\bar h{\rm{P}}\left( {\bar h} \right){\rm{d}}\bar h} \\
 &= \frac{T}{{{T_{\rm{0}}}}}{N_{\rm{T}}}q\left[ {\left. {\bar hF\left( {\bar h} \right)} \right|_{\bar h = 1/2}^{\bar h = 1} - \int\nolimits_{1/2}^1 {{\rm{F}}\left( {\bar h} \right){\rm{d}}\bar h} } \right] \\
 &\le \frac{T}{{{T_{\rm{0}}}}}{N_{\rm{T}}}q\left[ {\frac{3}{4} - \int\nolimits_{1/2}^1 {{\rm{F}}\left( {\bar h} \right){\rm{d}}\bar h} } \right]\\
 &= \frac{T}{{{T_{\rm{0}}}}}{N_{\rm{T}}}q\left[ {\int\nolimits_{1/2}^1 {\frac{3}{2}{\rm{d}}\bar h}  - \int\nolimits_{1/2}^1 {{\rm{F}}\left( {\bar h} \right){\rm{d}}\bar h} } \right] \\
 &= \frac{T}{{{T_{\rm{0}}}}}{N_{\rm{T}}}q\int\nolimits_{1/2}^1 {\frac{3}{2} - {\rm{F}}\left( {\bar h} \right){\rm{d}}\bar h} \\
 &= \frac{T}{{{T_{\rm{0}}}}}{N_{\rm{T}}}q\int\nolimits_{1/2}^1 {\left[ {1 - \int\nolimits_{1/2}^{\bar h} {{\rm{P}}\left( \theta  \right){\rm{d}}\theta } } \right]{\rm{d}}\bar h},
\end{aligned}}
\end{equation}
where ${\rm{F}}\left({\bar{h}}\right)$ is the cumulative distribution function for ${\rm{P}}\left({\bar{h}}\right)$\footnote{Note here that we assume that the cyber-insurer is considering the worst case.}.
Based on the formulated distribution of the insurance risk and the concept of risk-adjusted premium, the cyber-insurer can determine the premium as follows:
\begin{equation}\label{Eq:premium_determination}
\footnotesize{\Lambda \left( \gamma  \right) = \frac{T}{{{T_{\rm{0}}}}}{N_{\rm{T}}}q\int\nolimits_{1/2}^1 {\omega \left({1 - \int\nolimits_{1/2}^{\bar h} {{\rm{P}}\left( \theta  \right){\rm{d}}\theta } } , \gamma \right){\rm{d}}\bar h},}
\end{equation}
where $\omega \left(x,\gamma\right)$ is an increasing concave function of $x$ and belongs to the families of elementary transforms given in Section~5 of~\cite{wang1996premium}. Without loss of generality, we adopt the PH transform in our study, i.e., $\omega\left( x,\gamma \right) = {x^{\frac{1}{\gamma }}}$, $\gamma  \ge 1$ in Subsection~5.1 of~\cite{wang1996premium}. Then, the corresponding premium can be expressed as follows:
\begin{equation}\label{Eq:premium_determination_PH}
\footnotesize{\Lambda \left( \gamma  \right) = \frac{T}{{{T_{\rm{0}}}}}{N_{\rm{T}}}q\int\nolimits_{1/2}^1 {\left[ {1 - \int\nolimits_{1/2}^{\bar h} {{\rm{P}}\left( \theta  \right){\rm{d}}\theta } }\right]^{1/\gamma}{\rm{d}}\bar h},}
\end{equation}
where $\gamma$ is the premium coefficient which decides on the insurance policy. Namely, the cyber-insurer adjusts the premium by controlling $\gamma$ according to the insurance risk. It is worth noting that the term ${\left[ {1 - \int\nolimits_{1/2}^{\bar h} {{\rm{P}}\left( \theta  \right){\rm{d}}\theta } } \right]}$ in~(\ref{Eq:premium_determination_PH}) is smaller than $1$. Therefore, the larger $\gamma$ is, the higher the premium $\Lambda\left(\gamma\right)$ will be.

\subsection{System Model}
\label{subsec:model}

\subsubsection{The User's Utility}
\label{subsubsec:users_utility}

We suppose that each user in the blockchain service market has a service demand, which is determined by an intrinsic value $\theta_i$ from the uniform distribution ${\rm{F}}_{\rm{U}}$ over the interval $\left[0, 1\right]$. Here, $\theta_i$ can be interpreted as the probability for user $i$ to buy the blockchain service. We further assume that the intrinsic values of the users are independently distributed. The users also experience social externalities in which the decision of one user can influence the decisions of the other users. Let ${\rm{Pr}}\left[ \text{j buys the service} \right]$ denote the probability that user $j$ subscribes to the service, then, the utility of user $i$ can be expressed as follows~\cite{bloch2013pricing}:
\begin{equation}\label{Eq:utilitu_function}\begin{footnotesize}
{u_i} = {\bar{h}}+ {\theta _i} - {p_i} + \alpha \sum\limits_{j \in {\cal N}} {{g_{ij}}{\rm{Pr}}\left[ \text{j buys the service} \right]}  .\end{footnotesize}
\end{equation}
In (\ref{Eq:utilitu_function}), the first term, i.e., ${\bar{h}}$, is the investment ratio of the blockchain provider. ${\bar{h}}$ represents the positive effect owning to the effort of the blockchain provider in preventing the security breach due to double-spending attacks. According to (\ref{Eq:attack_probability}), the larger ${\bar{h}}$ is, the lower the successful probability of double-spending attacks is, and consequently the less the users will be affected. The third term, i.e., $p_i$, is the price of the service for user $i$\footnote{Note here that the uniform pricing in which all users are charged with the same price is a special case of the discriminative pricing adopted in this paper.}. The forth term of (\ref{Eq:utilitu_function}), i.e., {\footnotesize{$\alpha \sum\limits_{j \in {\cal N}} {{g_{ij}}{\rm{Pr}}\left[ \text{j buys the service} \right]}$}}, represents the positive social externality among the users. $g_{ij}$ is the element of the externality matrix ${\bf{G}}$~\cite{gong2017social},~\cite{xiong2017economic} and represents the level of the social externality (influence) that user $j$ has on user $i$. We assume that $g_{ij} \ne 0$, $\forall i \ne j$ and $g_{ii} = 0$, $\forall i \in {\cal{N}}$\footnote{Note here that our model can also be used to analyze the scenario that the users may act independently. For example, if $g_{ij}=g_{ji}=0$, user $j$ has zero level of the social externality (influence) on user $i$ and vice versa. Hence, users $i$ and $j$ act independently.}. Finally, $\alpha$ is a constant controlling the level of social externality for the entire network.

\subsubsection{Profits of the Blockchain Provider and the Cyber-insurer}
\label{subsubsec:profit}

The goals of the blockchain provider and the cyber-insurer are to maximize their individual profits. Based on our previous discussion, the payoff functions of the blockchain provider can be expressed as follows:
\begin{equation}\label{Eq:profit_of_provider}
\footnotesize{\begin{aligned}
{\Pi _{\rm{p}}}\left( {\bar h,{\bf{p}}} \right) =& \sum\limits_{i \in {\cal{N}}} {{p_i}{x_i}}  - \frac{{a\bar h}}{{1 - \bar h}} + {\bar{h}}\frac{T}{{{T_{\rm{0}}}}}{N_{\rm{T}}}r \\
&- \underbrace {\frac{T}{{{T_{\rm{0}}}}}{N_{\rm{T}}}q\int\nolimits_{1/2}^1 {{{\left[ {1 - \int\nolimits_{1/2}^t {{\rm{P}}\left( \theta  \right){\rm{d}}\theta } } \right]}^{1/\gamma }}{\rm{d}}t} }_{\text{premium}},
\end{aligned}}
\end{equation}
where the first term $\sum\nolimits_{i \in {\cal{N}}} {{p_i}{x_i}}$ is the revenue obtained from the users' payment for the blockchain service and $x_i$ is the probability that user $i$ buys the service defined in (\ref{Eq:x_i}). The second term, i.e., $h \!= \!\frac{{a\bar h}}{{1 \!- \!\bar h}}$, is the blockchain provider's investment in the infrastructure, and we have $h \!= \!\frac{{a\bar h}}{{1 \!- \!\bar h}} \!\Leftrightarrow\! \bar h\!=\!\frac{h}{{a \!+ \!h}}$. The third term, i.e., $\bar h\frac{T}{{{T_{\rm{0}}}}}{N_{\rm{T}}}r$, is the block mining reward received by the blockchain provider for maintaining the service. The last term, i.e., ${\frac{T}{{{T_{\rm{0}}}}}{N_{\rm{T}}}q\int\nolimits_{1/2}^1 {{{\left[ {1 - \int\nolimits_{1/2}^t {{\rm{P}}\left( \theta  \right){\rm{d}}\theta } } \right]}^{1/\gamma }}{\rm{d}}t} }$, is the premium paid by the blockchain provider to the cyber-insurer, and $q$ is the compensation price for one transaction.

On the other hand, the premium paid by the blockchain provider is the revenue of the cyber-insurer. Due to the uncertainty of double-spending attacks, we adopt the expected claim as the cyber-insurer's cost. Then, the cyber-insurer's payoff function can be expressed as:
\begin{equation}\label{Eq:profit_of_insurer}
\footnotesize{\begin{aligned}
{\Pi _{\rm{I}}}\left( \gamma  \right) =& \underbrace {\frac{T}{{{T_{\rm{0}}}}}{N_{\rm{T}}}q\int\nolimits_{1/2}^1 {{{\left[ {1 - \int\nolimits_{1/2}^t {{\rm{P}}\left( \theta  \right){\rm{d}}\theta } } \right]}^{1/\gamma }}{\rm{d}}t} }_{\text{premium}}\\
 &- {\rm{P}}\left( {\bar h} \right)\bar h\frac{T}{{{T_{\rm{0}}}}}{N_{\rm{T}}}q-\sigma({\bar{h}}, \gamma),
\end{aligned}}
\end{equation}
where the first term is the premium paid by the blockchain provider. The second term, i.e., ${\rm{P}}\left( {\bar h} \right) \bar h\frac{T}{{{T_{\rm{0}}}}}{N_{\rm{T}}}q$, is obtained as the product of the successful attack probability and the total claim paid to the blockchain provider for covering its loss. Finally, since the premium increases as the cyber-insurer's decision variable $\gamma$ increases, a rational cyber-insurer will keep $\gamma$ as high as possible to maximum its revenue. However, when the blockchain provider increases its investment ratio, the successful attack probability will decrease. As a result, the blockchain provider will have less incentive to pay an extremely high premium. Therefore, we introduce the last term in (\ref{Eq:profit_of_insurer}) to model the possibly negative impact on the expected payoff of the cyber-insurer as both the values of ${\bar{h}}$ and $\gamma$ increase, which is similar to the concept of the punishment on insurer adopted in~\cite{bloch2008informal}. Here, we define $\sigma \left( {\bar h,\gamma } \right) = {\sigma _1}\left( {\bar h} \right){\sigma _2}\left( \gamma  \right)$, where ${\sigma _1}\left( {\bar h} \right)$ is an increasing convex function of ${\bar{h}}$ with the following properties:
\begin{equation}\label{Eq:sigma_1}
\footnotesize{{\sigma _1}\left( {\bar h} \right)\left\{ {\begin{aligned}
&{ > 0,}&{\bar h > \frac{1}{2}},&\\
&{ = 0,}&{\bar h = \frac{1}{2}},&\\
&{ < 0,}&{\bar h < \frac{1}{2}}.&
\end{aligned}} \right.}
\end{equation}
The conditions in (\ref{Eq:sigma_1}) indicates that with ${\bar{h}} > \frac{1}{2}$, the blockchain provider's effort in investing the computing resource effectively reduces the successful attack probability. Consequently, the probability of the cyber-insurer paying the claim to the blockchain provider is also reduced. Then, keeping the highest premium has a negative effect on the cyber-insurer's payoff, e.g., by hurting its reputation or curbing the incentive for the blockchain provider to buy the insurance. Under the other two conditions in (\ref{Eq:sigma_1}), the lack of enough investment in infrastructure of the blockchain provider will induce higher probability of being successfully attacked, hence leading to a positive effect on the cyber-insurer's payoff due to the higher demand of financial protection. Additionally, $\sigma_2 \left(  \gamma  \right)$ is an increasing convex function of $\gamma$ and $\sigma_2 \left.\left(  \gamma  \right)\right|_{\gamma=1} = 0$. As such, when $\gamma=1$, the expected loss in~(\ref{Eq:expected_claim_insurer}) is equal to the premium in~(\ref{Eq:premium_determination_PH}), and there is no negative effect on the cyber-insurer's reputation. For tractable analysis, we adopt the following model:
\begin{footnotesize}\begin{equation}
  \label{Eq:sigma_model}
\sigma \left( {\bar h,\gamma } \right) = {\sigma _1}\left( {\bar h} \right){\sigma _2}\left( \gamma  \right) = \underbrace {{{\left( {\bar h - \frac{1}{2}} \right)}^3}}_{{\sigma _1}\left( {\bar h} \right)}\underbrace {\left( {\gamma  - 1} \right){\gamma ^\beta }}_{{\sigma _2}\left( \gamma  \right)}, \quad \beta>1.
\end{equation}\end{footnotesize}
It is worth noting that the cyber-insurer's payoff function in (\ref{Eq:profit_of_insurer}) may also adopt other models for $\sigma \left( {\bar h,\gamma } \right)$. The selected model in (\ref{Eq:sigma_model}) has no effect on our subsequent analysis.

\subsection{Stackelberg Game Formulation}
\label{subsec:game_formulation}

Considering the payoff functions of the market entities given in~(\ref{Eq:utilitu_function})-(\ref{Eq:profit_of_insurer}), it is natural to model the interactions in the blockchain service market as a two-stage game. In the first upper stage, the blockchain provider determines the price of the blockchain service, namely, the levels of acceptable transaction fees for each user $p_i$, and its ratio of investment in computing resources ${\bar{h}}$. Meanwhile, the cyber-insurer decides on the premium coefficient $\gamma$ by considering the insurance risk transferred from the blockchain provider. In the second lower stage, each user determines whether it will buy the blockchain service or not based on the prices and the investment ratio set by the blockchain provider. Accordingly, the interactions among the blockchain provider, the cyber-insurer and the users are formulated as a two-leader-multi-follower Stackelberg game. Specifically, the mutual interaction between the blockchain provider and cyber-insurer forms a noncooperative two-player leader-level subgame for achieving the equilibrium of the service price, the investment ratio, and the cyber-insurance policy. Then, the interaction among a number of users forms the follower-level noncooperative subgame for determining the service demand from the blockchain provider\footnote{The reason that the users are in the same level, i.e., lower-level, is that they have the same set of information and make decisions simultaneously. This is different from the blockchain provider that usually invests in the infrastructure and buys the cyber-insurance first to improve the security level of its blockchain-based service and then provides the service to the users. The cyber-insurer sells the cyber-insurance to the blockchain provider and forms a noncooperative relationship with the blockchain provider. As such, the cyber-insurer and blockchain provider make their decisions simultaneously and hence before the users. As a result, the blockchain provider and cyber-insurer are considered to be the leaders and their problems are defined in the upper-level.}. The Stackelberg game can be formally defined as follows.

\begin{enumerate}
\item \emph{User-level noncooperative subgame}: Given the fixed investment ratio ${\bar{h}}$ as well as the price vectors \begin{footnotesize}${\bf{p}}=\left[p_1, p_2, \ldots, p_{\left|{\cal{N}}\right|} \right]^\top$\end{footnotesize}, the user-level (follower) noncooperative subgame is defined by a four-tuple \begin{footnotesize}${\cal{G}}_{\rm{u}}=\left\{{\cal{N}}, {\bf{x}}, {\cal{X}}, {\bf{u}}\right\}$\end{footnotesize}, where
    \begin{itemize}
    \item \begin{footnotesize}${\cal{N}}$\end{footnotesize} is the set of active users;
    \item \begin{footnotesize}${\cal{X}}= \left\{ {\left. {{{\left[ {{x_1},{x_2}, \ldots {x_{\left| {\cal{N}} \right|}}} \right]}^\top}} \right|{x_i} \in \left[ {0,1} \right],i \in {\cal{N}}} \right\}$ $ \subset {\mathbb{R}}^{\left| {\cal{N}} \right|}$\end{footnotesize} defines the domain of ${\bf{x}}$ as an M-polyhedron;
    \item \begin{footnotesize}${\bf{x}} = {\left[ {{x_1},{x_2}, \ldots {x_{\left| {\cal{N}} \right|}}} \right]^\top}$\end{footnotesize} is the vector of the users' decision variables, where $x_i$ is the service demand of user~$i$ and \begin{footnotesize}${\bf{x}}\in{\cal{X}}$\end{footnotesize};
    \item \begin{footnotesize}${\bf{u}} = \left[u_1, u_2, \ldots, u_{\left|{\cal{N}}\right|}\right]^{\top}$\end{footnotesize} is the vector of the users' utilities with the given strategy \begin{footnotesize}${\bf{x}}$\end{footnotesize}, where \begin{footnotesize}$\forall i \in {\cal{N}}$, and $u_i$\end{footnotesize} is given in~(\ref{Eq:utilitu_function}).
    \end{itemize}
\item \emph{Leader-level noncooperative subgame}: Assume that the users' demand ${\bf{x}}$ has been found to be a parametric equilibrium as a function, i.e., mapping, of the leaders' strategies. Then, the blockchain provider and the cyber-insurer form a noncooperative game as a five-tuple \begin{footnotesize}${{\cal{G}}_{\rm{L}}} = \left\{ {{{\left[ {{{\bf{p}}^ \top },\bar h} \right]}^ \top },{{\cal{D}}_{\rm{P}}},\gamma ,{{\cal{D}}_{\rm{I}}},{\bf{\Pi}} } \right\}$\end{footnotesize}, where
    \begin{itemize}
    \item \begin{footnotesize}$\left[ {{{\bf{p}}^ \top },\bar h} \right]^\top=\left[ {{p_1},{p_2}, \ldots ,	{p_{\left| {\cal{N}} \right|}}, {\bar{h}}} \right]^\top$\end{footnotesize} is the strategy vector of the service prices and the investment ratio set by the blockchain provider with $p_i > 0$, $\forall i \in {\cal{N}}$ and ${\bar{h}} \in \left[\frac{1}{2}, 1\right)$;
    \item \begin{footnotesize}${{\cal{D}}_{\rm{P}}}=\left\{ \left. {{{\left[ {{{\bf{p}}^ \top },\bar h} \right]}^ \top }} \right|{p^{\rm{u}}} \ge {p_i} > 0,\forall i \in {\cal{N}}, \right.$ $ \left. \bar h \in \left[ {\frac{1}{2},1} \right) \right\}$\end{footnotesize} is the domain of the prices and investment ratio of the blockchain provider, where $p^{\rm{u}}$ is the upper bound of the price $p_i$ imposed by the government or market regulators;
    \item $\gamma$ is the cyber-insurer's premium coefficient for premium determination;
    \item \begin{footnotesize}${{\cal{D}}_{\rm{I}}}=\left\{\left. \gamma \right| \gamma^{\rm{u}} \ge \gamma > 1 \right\}$\end{footnotesize} defines the domain of $\gamma$, where $\gamma^{\rm{u}}$ is the upper bound of $\gamma$ imposed by the government or regulators;
    \item \begin{footnotesize}${\bf{\Pi}}=\left[{\Pi _{\rm{p}}}, {\Pi _{\rm{I}}}\right]^\top$\end{footnotesize} is the profit vector for the blockchain provider and the cyber-insurer.
    \end{itemize}
\end{enumerate}

\section{Game Equilibrium Analysis}
\label{sec:equilibrium_analysis}

Based on the formulation of the Stackelberg game in Section~\ref{subsec:game_formulation}, we are ready to analyze the market equilibrium using backward induction. We first obtain the Nash Equilibrium (NE) of the user-level noncooperative subgame ${\cal{G}}_{\rm{u}}$ by characterizing a system of interdependent demands. We provide the sufficient conditions for the existence and uniqueness of the NE in the user-level noncooperative subgame ${\cal{G}}_{\rm{u}}$ by solving the bounded linear complementarity problem of the subgame~\cite{bloch2013pricing}. Then, we substitute the parametric NE of ${\cal{G}}_{\rm{u}}$ into the leader-level noncooperative subgame ${\cal{G}}_{\rm{L}}$. Under a reasonable assumption, we show that the Jacobian matrix constructed from the payoff functions of each player in the leader-level noncooperative subgame is negative definite. Hence, we prove that the Stackelberg Equilibrium (SE) of the market game exists and is unique.

\subsection{Equilibrium Analysis for User-Level Noncooperative Subgame}
\label{subsec:solution_users_level_noncooperative_game}

Intuitively, user $i$ only buys the service when it has a positive payoff, namely $u_i>0$. This indicates that there exists a threshold ${\tilde \theta }_i$ for the intrinsic value ${\theta }_i$ in (\ref{Eq:utilitu_function}), such that user $i$ will buy the service only when $\theta_i > {\tilde \theta }_i$. ${\tilde \theta }_i$ can be obtained by setting $u_i=0$:
\begin{equation}
\footnotesize{\begin{aligned}
0 &= {\bar{h}}+ {\tilde \theta }_i - {p_i} + \alpha \sum\limits_{j \in {\cal N}} {{g_{ij}}{\rm{Pr}}\left[ \text{j buys the service} \right]} \\
 &= {\bar{h}}+ {\tilde \theta }_i - {p_i} + \alpha \sum\limits_{j \in {\cal N}} {{g_{ij}}{\rm{Pr}}\left[ \theta_j > {\tilde \theta }_j \right]}\\
 &= {\bar{h}}+ {\tilde \theta }_i - {p_i} + \alpha \sum\limits_{j \in {\cal N}} {{g_{ij}}\left[ {1 - {\rm{F}}_{\rm{U}}\left( {{{\tilde \theta }_j}} \right)} \right]}\\
  \Leftrightarrow {{\tilde \theta }_i} &= {p_i} - {\bar{h}} - \alpha \sum\limits_{j \in {\cal N}} {{g_{ij}}\left[ {1 - {\rm{F}}_{\rm{U}}\left( {{{\tilde \theta }_j}} \right)} \right]},
\end{aligned}}
\end{equation}
where \begin{footnotesize}${1 - {\rm{F}}_{\rm{U}}\left( {{{\tilde \theta }_j}} \right)}$\end{footnotesize} denotes the probability that user $j$ draws a valuation above the threshold ${{\tilde \theta }_j}$. For ease of exposition, let \begin{footnotesize}$x_i = {1 - {\rm{F}}_{\rm{U}}\left( {{{\tilde \theta }_i}} \right)}$\end{footnotesize} denote the probability that user $i$ buys the service, and $x_i$ can be further expressed as follows:
\begin{footnotesize}\begin{equation}\label{Eq:x_i}
\begin{aligned}
{x_i} &= 1 - {\rm{F}}_{\rm{U}}\left( {{{\tilde \theta }_i}} \right) = 1 - {\rm{F}}_{\rm{U}}\left( {{p_i} - {\bar{h}} - \alpha \sum\limits_{j {\in {\cal{N}}}} {{g_{ij}}\left[ {1 - {\rm{F}}_{\rm{U}}\left( {{{\tilde \theta }_j}} \right)} \right]} } \right)\\
 &= 1 - {p_i} + {\bar{h}} + \alpha \sum\limits_{j {\in {\cal{N}}}} {{g_{ij}}x_j}.
\end{aligned}
\end{equation}\end{footnotesize}
From~(\ref{Eq:x_i}), we can characterize a system of the users' interdependent demands as follows:
\begin{footnotesize}\begin{equation}\label{Eq:demand_system_1}
{x_i} = \left\{ {\begin{array}{*{20}{c}}
{0,}&{\text{if $1 - {p_i} + {\bar{h}} + \alpha \sum\limits_{j {\in {\cal{N}}}} {{g_{ij}}x_j}   < 0$}},\\
{1,}&{ \text{if $1 - {p_i} + {\bar{h}} + \alpha \sum\limits_{j {\in {\cal{N}}}} {{g_{ij}}x_j}    > 1$}},\\
{1 - {p_i} + {\bar{h}} + \alpha \sum\limits_{j {\in {\cal{N}}}} {{g_{ij}}x_j}  ,}&{\text{otherwise}}.
\end{array}} \right.
\end{equation}\end{footnotesize}
After subtracting each condition in (\ref{Eq:demand_system_1}) by the value of its corresponding $x_i$, we convert (\ref{Eq:demand_system_1}) into the following form:
\begin{footnotesize}\begin{equation}\label{Eq:demand_system_2}
\begin{aligned}
&{x_i} =\\
&\left\{ {\begin{aligned}
{0,}\quad{\text{if $1 - p_i + {\bar{h}} + \alpha \sum\limits_{j \in {\cal{N}}} {g_{ij}{x_j}}  - {x_i} < 0 - \underbrace {{x_i}}_{ = 0} = 0$,}}\\
{1,}\quad{\text{ if $1 - p_i + {\bar{h}} + \alpha \sum\limits_{j \in {\cal{N}}} {g_{ij}{x_j}}  - {x_i}  > 1 - \underbrace {{x_i}}_{ = 1} = 0$,}}\\
{1 - {p_i} + {\bar{h}} + \alpha \sum\limits_{j \in {\cal{N}}} {{g_{ij}}{x_j}} ,}\quad{\text{if $1 - p_i + {\bar{h}} + \alpha \sum\limits_{j \in {\cal{N}}} {g_{ij}{x_j}}  - {x_i}  = 0$}.}
\end{aligned}} \right.
\end{aligned}
\end{equation}\end{footnotesize}
Furthermore, (\ref{Eq:demand_system_2}) can be rewritten into the following matrix form:
\begin{equation}\label{Eq:demand_system_3}
\footnotesize{\begin{aligned}
&{x_i} =\\
&\left\{ {\begin{aligned}
0,\qquad\qquad&\quad\text{if ${{\left\{ {\left( {1 + {\bar{h}}} \right){\bf{1}} - {\bf{p}} - \left( {{\bf{I}} - \alpha {\bf{G}}} \right){\bf{x}}} \right\}}_i} < 0$,}\\
1,\qquad\qquad&\quad\text{if ${{\left\{ {\left( {1 + {\bar{h}}} \right){\bf{1}} - {\bf{p}} - \left( {{\bf{I}} - \alpha {\bf{G}}} \right){\bf{x}}} \right\}}_i} > 0$,}\\
{{\left\{ {\left( {1 + {\bar{h}}} \right){\bf{1}} - {\bf{p}} + \alpha {\bf{G}}{\bf{x}}} \right\}}_i},&\quad\text{if ${{\left\{ {\left( {1 + {\bar{h}}} \right){\bf{1}} - {\bf{p}} - \left( {{\bf{I}} - \alpha {\bf{G}}} \right){\bf{x}}} \right\}}_i} = 0$,}
\end{aligned}} \right.
\end{aligned}}
\end{equation}
where \begin{footnotesize}${\bf{I}}$\end{footnotesize} is the identity matrix, \begin{footnotesize}${\bf{1}}=\left[1,1,\ldots,1\right]^\top \in {\mathbb{R}}^{\left|{\mathcal{N}}\right| \times 1}$ and $\left\{\cdot\right\}_i$\end{footnotesize} represents the $i$-th entry of a vector. With (\ref{Eq:demand_system_3}), we are ready to investigate the properties of the NE in the user-level noncooperative subgame.

\begin{assumption}\label{Ass:existent_and_uniqueness_for_user_game}
\begin{footnotesize}$\alpha \rho \left({\bf{G}}\right) <1$\end{footnotesize}, where \begin{footnotesize}$\rho \left(\cdot\right)$\end{footnotesize} is the spectral norm of a matrix.
\end{assumption}
The physical meaning of Assumption~\ref{Ass:existent_and_uniqueness_for_user_game} is illustrated in Fig.~\ref{fig:assumption}, where the black area is the feasible area of $\alpha \rho \left({\bf{G}}\right)$. If $\alpha \rho \left({\bf{G}}\right)$ exceeds the feasible area, the social externality will be too strong such that every user will buy the service, which is impossible in reality.

\begin{figure}[!]
 \centering
 \includegraphics[width=0.4\textwidth,trim=110 5 120 30,clip]{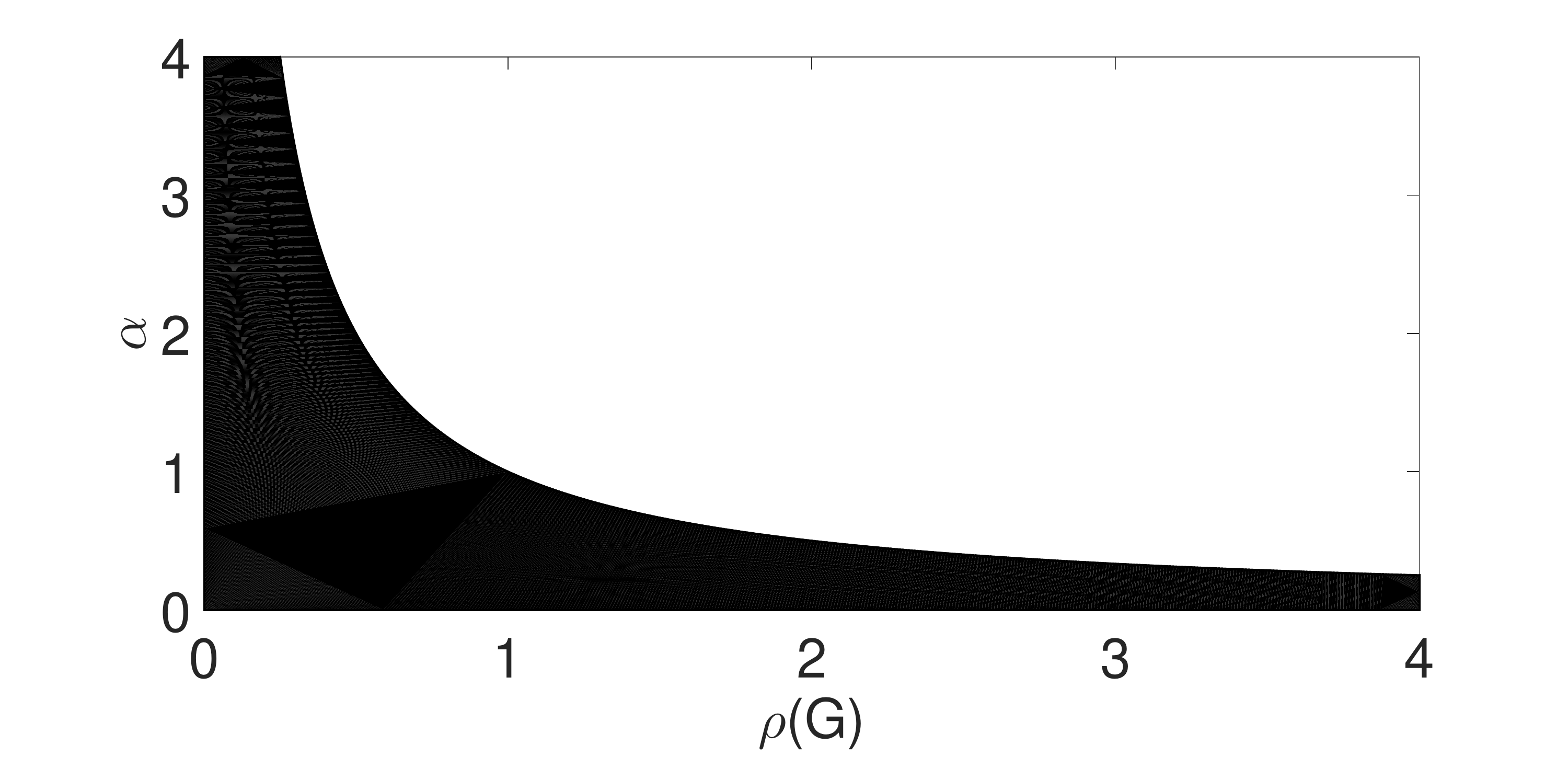}
 \caption{An illustration of Assumption~\ref{Ass:existent_and_uniqueness_for_user_game}.}
 \label{fig:assumption}
\end{figure}

\begin{theorem}\label{Th:existent_and_uniqueness_for_user_game}
The user-level noncooperative subgame ${\cal{G}}_{\rm{u}}$ admits a unique NE under Assumption~\ref{Ass:existent_and_uniqueness_for_user_game}.
\end{theorem}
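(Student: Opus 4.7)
The plan is to recast each user's best response as a projection onto $[0,1]$ and then apply the Banach contraction principle to the joint best-response map. From (\ref{Eq:demand_system_3}), the three-branch description of $x_i$ is precisely
\[
T_i(\mathbf{x}) \;=\; \min\bigl\{1,\,\max\bigl\{0,\;1 - p_i + \bar h + \alpha(\mathbf{G}\mathbf{x})_i\bigr\}\bigr\},
\]
i.e., the Euclidean projection of $1 - p_i + \bar h + \alpha(\mathbf{G}\mathbf{x})_i$ onto $[0,1]$. Stacking the components yields a self-map $T:[0,1]^{|\mathcal{N}|}\to[0,1]^{|\mathcal{N}|}$, and by construction a strategy profile is an NE of $\mathcal{G}_{\rm u}$ if and only if it is a fixed point of $T$.

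The next step is to show that $T$ is a strict contraction in the Euclidean norm. Since the scalar map $z\mapsto\min\{1,\max\{0,z\}\}$ is $1$-Lipschitz (projection onto a closed convex set is nonexpansive), one has componentwise $|T_i(\mathbf{x}) - T_i(\mathbf{y})| \le \alpha\,|(\mathbf{G}(\mathbf{x}-\mathbf{y}))_i|$, which aggregates to
\[
\|T(\mathbf{x})-T(\mathbf{y})\|_2 \;\le\; \alpha\,\|\mathbf{G}(\mathbf{x}-\mathbf{y})\|_2 \;\le\; \alpha\rho(\mathbf{G})\,\|\mathbf{x}-\mathbf{y}\|_2 .
\]
By Assumption~\ref{Ass:existent_and_uniqueness_for_user_game} the factor $\alpha\rho(\mathbf{G})$ is strictly less than $1$. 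Since $[0,1]^{|\mathcal{N}|}$ is a nonempty closed subset of $\mathbb{R}^{|\mathcal{N}|}$ (hence complete) and $T$ maps it into itself, the Banach fixed-point theorem delivers a unique fixed point, i.e., a unique NE.

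The only delicate step is the initial reformulation: establishing that the self-referential case distinctions in (\ref{Eq:demand_system_3}) coincide with the clean clip/projection formulation above. This is a direct case analysis, verifying in each of the three branches that after substituting the declared value of $x_i$ the defining inequality reduces to a condition on whether $1 - p_i + \bar h + \alpha(\mathbf{G}\mathbf{x})_i$ lies below $0$, above $1$, or inside $[0,1]$. Once this equivalence is in hand the remainder is a textbook contraction-mapping argument, with Assumption~\ref{Ass:existent_and_uniqueness_for_user_game} playing the single decisive role of forcing the contraction factor strictly below one.
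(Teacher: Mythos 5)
Your proposal is correct, but it proves the theorem by a genuinely different route than the paper. The paper treats the system (\ref{Eq:demand_system_3}) as a bounded linear complementarity problem (following \cite{bloch2013pricing,simsek2005uniqueness}) and reduces existence and uniqueness to showing that $\left({\bf{I}}-\alpha{\bf{G}}\right)$ is a P-matrix: under Assumption~\ref{Ass:existent_and_uniqueness_for_user_game} it is a non-singular M-matrix, and Theorem~6.2.3 of \cite{berman1994nonnegative} then yields the P-property. You instead observe that (\ref{Eq:demand_system_1})--(\ref{Eq:demand_system_3}) say exactly that the equilibrium demand is a fixed point of the clipping map $T_i({\bf{x}})=\min\left\{1,\max\left\{0,\,1-p_i+\bar h+\alpha({\bf{G}}{\bf{x}})_i\right\}\right\}$, and you apply the Banach fixed-point theorem, using nonexpansiveness of the projection onto $[0,1]$ and the bound $\left\|{\bf{G}}{\bf{v}}\right\|_2\le\rho({\bf{G}})\left\|{\bf{v}}\right\|_2$, which is valid precisely because Assumption~\ref{Ass:existent_and_uniqueness_for_user_game} defines $\rho(\cdot)$ as the spectral (operator $\ell_2$) norm; had it been the spectral radius of a non-symmetric ${\bf{G}}$, this step would need a different norm. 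Your argument is more elementary and self-contained, and it additionally gives a constructive, geometrically convergent fixed-point iteration for computing the NE; the paper's LCP/P-matrix route leans on external results but is in principle sharper, since positive definiteness of the symmetric part of $\left({\bf{I}}-\alpha{\bf{G}}\right)$ (implied by $\alpha\rho({\bf{G}})<1$) is a stronger requirement than the P-matrix property that the complementarity-problem theory actually needs. Both proofs establish the same statement under the stated assumption.
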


\begin{proof}
(\ref{Eq:demand_system_3}) is defined as a bounded linear complementarity problem in~\cite{bloch2013pricing}. It is a linear instance of the general mixed complementarity problems discussed in~\cite{simsek2005uniqueness}. As discussed in~\cite{simsek2005uniqueness}, the linear instance of a complementarity problem admits a unique solution, i.e., the NE, to the user-level noncooperative subgame \begin{footnotesize}${\cal{G}}_{\rm{u}}$\end{footnotesize}, if \begin{footnotesize}$\left({\bf{I}}-\alpha {\bf{G}}\right)$\end{footnotesize} is a P-matrix\footnote{A matrix $A$ is a P-matrix if all its principal minors are positive.}.

Since \begin{footnotesize}$\alpha \rho \left({\bf{G}}\right) <1$\end{footnotesize} under Assumption~\ref{Ass:existent_and_uniqueness_for_user_game}, all the eigenvalues of the matrix \begin{footnotesize}$\alpha {\bf{G}}$\end{footnotesize} belong to \begin{footnotesize}$\left(0, 1\right)$\end{footnotesize} and hence all the eigenvalues of the matrix \begin{footnotesize}$\left({\bf{I}}-\alpha {\bf{G}}\right)$\end{footnotesize} belong to \begin{footnotesize}$\left(0, 1\right)$\end{footnotesize}. Therefore, \begin{footnotesize}$\left({\bf{I}}-\alpha {\bf{G}}\right)$\end{footnotesize} is a non-singular M-matrix\footnote{A matrix $A$ is a non-singular M-matrix if $A = I - B$ for a positive matrix $B$ with largest eigenvalue $\rho \left(B\right) < 1$.}.

Based on Theorem 6.2.3 in~\cite{berman1994nonnegative}, ``any
non-singular M-matrix is a P-matrix'', \begin{footnotesize}$\left({\bf{I}}-\alpha {\bf{G}}\right)$\end{footnotesize} is a P-matrix and then there exists a unique NE in the user-level noncooperative subgame \begin{footnotesize}${\cal{G}}_{\rm{u}}$\end{footnotesize}. Thereby, the proof is completed.
\end{proof}

According to the system of interdependent demands (\ref{Eq:demand_system_3}), the set of users ${\cal{N}}$ can be partitioned into three subsets, i.e., \begin{footnotesize}${\cal{S}}_0$, ${\cal{S}}_1$, and ${\cal{S}}$\end{footnotesize} as follows:
\begin{itemize}
\item \begin{footnotesize}${{\cal{S}}_0} = \left\{ {i\left| {{{\left\{ {\left( {1 + {\bar{h}}} \right){\bf{1}} - {\bf{p}} - \left( {{\bf{I}} - \alpha {\bf{G}}} \right){\bf{x}}} \right\}}_i} < 0,\forall i \in {\cal{N}}} \right.} \right\}$\end{footnotesize} is the set of users which will not buy the blockchain service,

\item \begin{footnotesize}${{\cal{S}}_1} = \left\{ {i\left| {{{\left\{ {\left( {1 + {\bar{h}}} \right){\bf{1}} - {\bf{p}} - \left( {{\bf{I}} - \alpha {\bf{G}}} \right){\bf{x}}} \right\}}_i} > 0,\forall i \in {\cal{N}}} \right.} \right\}$\end{footnotesize} is the set of users which surely will buy the blockchain service,

\item \begin{footnotesize}${\cal{S}} = {\cal{N}} \backslash \left( {{S_0} \cup {S_1}} \right) = \{ {i\vert {\{ {( 1 + {\bar{h}} ){\bf{1}} - {\bf{p}} - ( {{\bf{I}} - \alpha {\bf{G}}} ){\bf{x}}} \}}_i} = 0,\forall i \in {\cal{N}} \}$\end{footnotesize} is the set of users which buy the blockchain service with a probability over $\left[0, 1\right]$.
\end{itemize}
Then, we obtain the following theorem:
\begin{theorem}\label{Th:set_belong_for_user_game}
The users in the user-level noncooperative subgame \begin{footnotesize}${\cal{G}}_{\rm{u}}$\end{footnotesize} only belong to \begin{footnotesize}$\cal{S}$\end{footnotesize}, i.e., \begin{footnotesize}${\cal{S}}={\cal{N}}$\end{footnotesize}, \begin{footnotesize}${\cal{S}}_0=\emptyset$\end{footnotesize}, and \begin{footnotesize}${\cal{S}}_1=\emptyset$\end{footnotesize}. Given the service price vector \begin{footnotesize}${\bf{p}}$\end{footnotesize}, the optimal solution to the system of interdependent demands (\ref{Eq:demand_system_3}) is
\begin{footnotesize}\begin{equation}
{{\bf{x}}^*} = {\left( {{\bf{I}} - \alpha {\bf{G}}} \right)^{ - 1}}\left[ {\left( {1 + {\bar{h}}} \right){\bf{1}}} - {\bf{p}} \right].
\end{equation}\end{footnotesize}
\end{theorem}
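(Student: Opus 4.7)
The plan is to use the uniqueness statement of Theorem~\ref{Th:existent_and_uniqueness_for_user_game} to reduce the claim to exhibiting a single feasible interior solution of the linear complementarity problem~(\ref{Eq:demand_system_3}) as a whole. If such a point exists, it must coincide with the unique NE, which forces $\mathcal{S}=\mathcal{N}$ and simultaneously yields the closed form for $\mathbf{x}^{\star}$.

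I would start by defining the candidate $\mathbf{x}^{\star}:=(\mathbf{I}-\alpha\mathbf{G})^{-1}[(1+\bar h)\mathbf{1}-\mathbf{p}]$, which is well defined because Assumption~\ref{Ass:existent_and_uniqueness_for_user_game} makes $(\mathbf{I}-\alpha\mathbf{G})$ a non-singular M-matrix. By construction, the residual $(1+\bar h)\mathbf{1}-\mathbf{p}-(\mathbf{I}-\alpha\mathbf{G})\mathbf{x}^{\star}$ is the zero vector, so the third (interior) branch of~(\ref{Eq:demand_system_3}) is activated at every coordinate, and it remains only to verify that $\mathbf{x}^{\star}\in[0,1]^{|\mathcal{N}|}$. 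For the nonnegativity $\mathbf{x}^{\star}\ge\mathbf{0}$ I would invoke the Neumann-series representation $(\mathbf{I}-\alpha\mathbf{G})^{-1}=\sum_{k\ge 0}(\alpha\mathbf{G})^k$, which converges and is entrywise nonnegative under Assumption~\ref{Ass:existent_and_uniqueness_for_user_game}, applied to the vector $(1+\bar h)\mathbf{1}-\mathbf{p}$ whose coordinates are nonnegative as long as $p_i\le 1+\bar h$. For the upper bound $x_i^{\star}\le 1$ I would combine the same Neumann expansion with the spectral-radius bound $\alpha\rho(\mathbf{G})<1$ and the positivity $p_i>0$ that is built into the strategy domain $\mathcal{D}_{\mathrm P}$, reducing the claim to a coordinatewise scalar inequality.

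Once $\mathbf{x}^{\star}$ is shown to lie in $[0,1]^{|\mathcal{N}|}$, it is an admissible solution of the LCP with every $i$ falling in the interior branch, and Theorem~\ref{Th:existent_and_uniqueness_for_user_game} then identifies it as the unique NE; hence $\mathcal{S}_0=\mathcal{S}_1=\emptyset$ and $\mathcal{S}=\mathcal{N}$, and the closed-form expression $\mathbf{x}^{\star}=(\mathbf{I}-\alpha\mathbf{G})^{-1}[(1+\bar h)\mathbf{1}-\mathbf{p}]$ falls out directly by inverting the interior equation.

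The main obstacle will be the componentwise feasibility step $\mathbf{x}^{\star}\in[0,1]^{|\mathcal{N}|}$: unlike existence and uniqueness, it does not follow from Assumption~\ref{Ass:existent_and_uniqueness_for_user_game} alone but requires the admissible ranges built into $\mathcal{D}_{\mathrm P}$, in particular the tacit requirement that no user is priced completely out of the market ($p_i<1+\bar h$, ensured by a suitable reading of the ceiling $p^{\mathrm u}$) and that no user obtains a guaranteed positive surplus even without any participation by others (ruling out $\mathcal{S}_1$). Reconciling these two coordinatewise bounds with the matrix-level Neumann estimate is where the argument is strictly stronger than Theorem~\ref{Th:existent_and_uniqueness_for_user_game}, and where I would spend the most care, since this is the step that genuinely uses the parameter regime of the market model rather than just the algebraic structure of the LCP.
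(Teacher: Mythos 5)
Your overall strategy --- exhibit the candidate $\mathbf{x}^{\star}=(\mathbf{I}-\alpha\mathbf{G})^{-1}\left[(1+\bar h)\mathbf{1}-\mathbf{p}\right]$, verify it is feasible for the complementarity problem with every coordinate in the interior branch, and then invoke the uniqueness from Theorem~\ref{Th:existent_and_uniqueness_for_user_game} --- is not the paper's argument, and the step you yourself flag as the main obstacle is in fact a genuine gap that cannot be closed from the stated assumptions. The componentwise feasibility $\mathbf{x}^{\star}\in[0,1]^{|\mathcal{N}|}$ is simply false for an arbitrary admissible price vector: the domain ${\cal{D}}_{\rm{P}}$ only imposes $0<p_i\le p^{\rm{u}}$, so nothing prevents, say, $p_i<\bar h$, in which case already the zeroth term of your Neumann expansion gives $x_i^{\star}\ge (1+\bar h)-p_i>1$; the true LCP solution then saturates at $x_i=1$ and user $i$ lies in ${\cal{S}}_1$, contradicting the claim. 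Symmetrically, if $p^{\rm{u}}>1+\bar h$ were permitted, a high price could push a coordinate of $(1+\bar h)\mathbf{1}-\mathbf{p}$ negative and place a user in ${\cal{S}}_0$. Assumption~\ref{Ass:existent_and_uniqueness_for_user_game} controls only the spectral radius of $\alpha\mathbf{G}$ and gives you invertibility and entrywise nonnegativity of the inverse; it gives no coordinatewise upper bound. The extra hypotheses you would need (essentially $\bar h\lesssim p_i<1+\bar h$ corrected for the externality terms) are nowhere assumed in the paper.

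The paper escapes this because Theorem~\ref{Th:set_belong_for_user_game} is really a statement about the blockchain provider's \emph{optimal} prices, not about every $\mathbf{p}\in{\cal{D}}_{\rm{P}}$, and its proof is an economic improvement argument rather than a feasibility check: if some user $i$ had $x_i^{\ast}=0$ with strictly negative residual, the provider could lower $p_i$ and convert zero revenue $p_i^{\ast}x_i^{\ast}=0$ into positive revenue, so such a price is not revenue-maximizing; if some user $l$ had $x_l^{\ast}=1$ with strictly positive residual, the provider could raise $p_l$ by the slack $\epsilon$ while keeping $x_l^{\ast}=1$, again strictly improving revenue. Both perturbations contradict optimality of $\mathbf{p}^{\ast}$, which forces every coordinate into the equality branch and yields the closed form by inverting $(\mathbf{I}-\alpha\mathbf{G})$. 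To repair your proposal you would either have to import this provider-optimality argument (at which point your Neumann-series machinery becomes unnecessary) or add parameter restrictions that the paper does not make. As a side remark, the paper's own phrasing ``given any price vector $\mathbf{p}$'' in its concluding line overstates what its proof establishes --- the argument only applies at the provider's optimum --- but that does not rescue your route, which needs feasibility at the given $\mathbf{p}$ and does not have it.
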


\begin{proof}
We denote the optimal price by \begin{footnotesize}${\bf{p}}^*$\end{footnotesize} and the optimal demand by \begin{footnotesize}${\bf{x}}^*$\end{footnotesize}. Then, we show that \begin{footnotesize}${\cal{S}}_0=\emptyset$\end{footnotesize} and \begin{footnotesize}${\cal{S}}_1=\emptyset$\end{footnotesize}.

\begin{enumerate}
\item \begin{footnotesize}${\cal{S}}_0=\emptyset$\end{footnotesize}: We first assume that \begin{footnotesize}${\cal{S}}_0 \ne \emptyset$\end{footnotesize}. This means that \begin{footnotesize}$\exists i\in{\cal{N}}$, such that ${x_i}^*=0$\end{footnotesize} and \begin{footnotesize}$\{(1 + {\bar{h}} ){\bf{1}} - {\bf{p}}^* - ({\bf{I}}- \alpha {\bf{G}} ){\bf{x}}^*\}_{i}<0$\end{footnotesize}. Because \begin{footnotesize}${{{\left\{ {\left( {1 + {\bar{h}}} \right){\bf{1}} - {\bf{p}} - \left( {{\bf{I}} - \alpha {\bf{G}}} \right){\bf{x}}} \right\}}_i}}$ $=1 + {\bar{h}} + \alpha \sum\limits_{j \in {\cal{N}}} {g_{ij}{x_j}} - x_i -p_i$\end{footnotesize} is continuous on $p_i$ and \begin{footnotesize}$1 + {\bar{h}} + \alpha \sum\limits_{j \in {\cal{N}}} {g_{ij}{x_j}^*} > 0$\end{footnotesize}, there exists a \begin{footnotesize}${p_i}'$\end{footnotesize} where \begin{footnotesize}${p_i}' < {p_i}^*$\end{footnotesize} such that \begin{footnotesize}$1 + {\bar{h}} + \alpha \sum\limits_{j \in {\cal{N}}} {g_{ij}{x_j}^*}-{p_i}' > 0$\end{footnotesize} and correspondingly \begin{footnotesize}${x_i}' > 0$\end{footnotesize}. This indicates that when the service price charged to user $i$ decreases from \begin{footnotesize}${p_i}^*$\end{footnotesize} to \begin{footnotesize}${p_i}'$\end{footnotesize}, user $i$ has an incentive to increase its demand from \begin{footnotesize}${x_i}^* = 0$\end{footnotesize} to \begin{footnotesize}${x_i}' > 0$\end{footnotesize}. Consequently, the revenue of the blockchain provider will increase since \begin{footnotesize}${p_i}^*{x_i}^* = 0$\end{footnotesize} while \begin{footnotesize}${p_i}'{x_i}' > 0$\end{footnotesize}. Therefore, \begin{footnotesize}${x_i}^*$\end{footnotesize} and \begin{footnotesize}${p_i}^*$\end{footnotesize} cannot be the optimal demand and price for user $i$, respectively. Hence, \begin{footnotesize}$i \notin {\cal{S}}_0$, $\forall i \in {\cal{N}}$\end{footnotesize} and \begin{footnotesize}${\cal{S}}_0=\emptyset$\end{footnotesize}.

\item \begin{footnotesize}${\cal{S}}_1=\emptyset$\end{footnotesize}: Similarly, we assume that \begin{footnotesize}${\cal{S}}_1 \ne \emptyset$\end{footnotesize}. This means that \begin{footnotesize}$\exists l \in {\cal{N}}$, ${x_l}^*=1$\end{footnotesize} and \begin{footnotesize}${{{\left\{ {\left( {1 + {\bar{h}}} \right){\bf{1}} - {\bf{p}}^* - \left( {{\bf{I}} - \alpha {\bf{G}}} \right){\bf{x}}^*} \right\}}_l} > 0}$\end{footnotesize}. Since \begin{footnotesize}${{{\left\{ {\left( {1 + {\bar{h}}} \right){\bf{1}} - {\bf{p}} - \left( {{\bf{I}} - \alpha {\bf{G}}} \right){\bf{x}}} \right\}}_l}}$ $=1 + {\bar{h}} + \alpha \sum\limits_{j \in {\cal{N}}} {g_{ij}{x_j}} - x_l -p_l$\end{footnotesize} is continuous on \begin{footnotesize}$p_l$\end{footnotesize}, there exists an \begin{footnotesize}$\epsilon$\end{footnotesize} where \begin{footnotesize}$\epsilon > 0$\end{footnotesize} such that \begin{footnotesize}$1 + {\bar{h}} + \alpha \sum\limits_{j \in {\cal{N}}} {g_{ij}{x_j}^*}- \left( {p_l}^* + \epsilon \right) > 0$\end{footnotesize} and \begin{footnotesize}${x_l}^*=1$\end{footnotesize}. Let \begin{footnotesize}$\epsilon = \bar h + \alpha \sum\limits_{j \in N} {{g_{ij}}{x_j}^*}  - {p_l}^*$\end{footnotesize}, we have \begin{footnotesize}$1 + {\bar{h}} + \alpha \sum\limits_{j \in {\cal{N}}} {g_{ij}{x_j}^*}- \left( {p_l}^* + \epsilon \right) = 0$\end{footnotesize}, and here \begin{footnotesize}${x_l}^* = 1 - \left({p_l}^*+\epsilon\right) + {\bar{h}} + \sum\limits_{j {\in {\cal{N}}}} {{g_{lj}}{x_j}^*}=  1$\end{footnotesize}. This means that even if the service price of user $l$ increases from \begin{footnotesize}${p_l}^*$\end{footnotesize} to \begin{footnotesize}$\left({p_l}^*+\epsilon\right)$\end{footnotesize}, the demand of user $l$ is still equal to $1$ while the profit of the blockchain provider has been increased from \begin{footnotesize}${p_l}^*{x_l}^*={p_l}^*$\end{footnotesize} to \begin{footnotesize}$\left({p_l}^*+\epsilon\right){x_l}^*={p_l}^*+\epsilon$\end{footnotesize}. Moreover, since
    \begin{equation}\label{Eq:epsilon_max}
    \footnotesize{\begin{aligned}
    &{\left\{ {\left( {1 + \bar h} \right){\bf{1}} - {{\bf{p}}^*} - \left( {{\bf{I}} - \alpha {\bf{G}}} \right){{\bf{x}}^*}} \right\}_l} - {\epsilon }\\
    =& 1 + \bar h - \left( {{p_l}^* + {\epsilon }} \right) - \underbrace {{x_l}^*}_1 + \alpha \sum\limits_{j \in {\cal{N}}} {{g_{lj}}{x_j}^*} = 0,
    \end{aligned}}
    \end{equation}
    user $l$ belongs to \begin{footnotesize}${\cal{S}}$\end{footnotesize} instead of \begin{footnotesize}${\cal{S}}_1$\end{footnotesize}. Therefore, \begin{footnotesize}${p_l}^*$\end{footnotesize} and \begin{footnotesize}${x_l}^*$\end{footnotesize} are not the optimal price and demand for user $l$, respectively. Hence, \begin{footnotesize}$l \notin {\cal{S}}_1$, $\forall l \in {\cal{N}}$ and ${\cal{S}}_1 = \emptyset$\end{footnotesize}.
\end{enumerate}
To conclude, given any price vector \begin{footnotesize}${\bf{p}}$\end{footnotesize} of the blockchain service, the condition
\begin{equation}
\footnotesize{{\left\{ {\left( {1 + {\bar{h}}} \right){\bf{1}} - {\bf{p}} - \left( {{\bf{I}} - \alpha {\bf{G}}} \right){{\bf{x}}^*}} \right\}_i} = 0}
\end{equation}
will be satisfied for all user $i \in {\cal{N}}$. Therefore,
\begin{footnotesize}\begin{equation}
\left( {1 + {\bar{h}}} \right){\bf{1}} - {\bf{p}} - \left( {{\bf{I}} - \alpha {\bf{G}}} \right){{\bf{x}}^*} = {\bf{0}},
\end{equation}\end{footnotesize}
\begin{footnotesize}\begin{equation}\label{Eq:NE_to_user_game_1}
\left( {{\bf{I}} - \alpha {\bf{G}}} \right){{\bf{x}}^*} = \left( {1 + {\bar{h}}} \right){\bf{1}} - {\bf{p}}.
\end{equation}\end{footnotesize}
Since we have already shown that the matrix \begin{footnotesize}$\left( {{\bf{I}} - \alpha {\bf{G}}} \right)$\end{footnotesize} is a non-singular matrix in Theorem~\ref{Th:existent_and_uniqueness_for_user_game}, the inverse of \begin{footnotesize}$\left( {{\bf{I}} - \alpha {\bf{G}}} \right)$\end{footnotesize} exists. Multiply both sides of~(\ref{Eq:NE_to_user_game_1}) by \begin{footnotesize}$\left( {{\bf{I}} - \alpha {\bf{G}}} \right)^{-1}$\end{footnotesize}, then, the optimal solution to the system of interdependent demands, or equivalently, the NE to the user-level noncooperative subgame \begin{footnotesize}${\cal{G}}_{\rm{u}}$\end{footnotesize}, is
\begin{footnotesize}\begin{equation}\label{Eq:optimal_solution_to_the_user_level_game}
{{\bf{x}}^*} = {\left( {{\bf{I}} - \alpha {\bf{G}}} \right)^{ - 1}}\left[ {\left( {1 + {\bar{h}}} \right){\bf{1}}} - {\bf{p}} \right].
\end{equation}\end{footnotesize}
The proof is completed.
\end{proof}

\subsection{Equilibrium Analysis for Leader-Level Noncooperative Subgame}
\label{subsec:solution_to_the_leader_level_game}

After deriving the equilibrium demand of the users, we investigate the leader-level noncooperative subgame ${\cal{G}}_{\rm{L}}$ for the blockchain provider and the cyber-insurer. At the NE, no player can increase its profit by choosing a different strategy provided that the other player' strategy is unchanged~\cite{osborne2004introduction}. In what follows, we first prove that an NE exists in the leader-level noncooperative subgame ${\cal{G}}_{\rm{L}}$. Then, we prove that this NE is unique.

Substituting the optimal demand of the users derived in (\ref{Eq:optimal_solution_to_the_user_level_game}) into the profit function of the blockchain provider, we can rewrite~(\ref{Eq:profit_of_provider}) into a matrix form as follows:
\begin{footnotesize}\begin{equation}\label{Eq:profit_of_provider_1}
\begin{aligned}
{{\Pi_{\rm{p}}}}\left( {\bar h,{\bf{p}}} \right)=& {{\bf{p}}^ \top }{\left( {{\bf{I}} - \alpha {\bf{G}}} \right)^{ - 1}}\left[ {\left( {1 + \bar h} \right){\bf{1}} - {\bf{p}}} \right] - \frac{{a\bar h}}{{1 - \bar h}} + {\bar{h}}\frac{T}{{{T_{\rm{0}}}}}{N_{\rm{T}}}r \\
&- \frac{T}{{{T_{\rm{0}}}}}{N_{\rm{T}}}q\int\nolimits_{1/2}^1 {{{\left[ {1 - \int\nolimits_{1/2}^t {{\rm{P}}\left( \theta  \right){\rm{d}}\theta } } \right]}^{1/\gamma }}{\rm{d}}t}.
\end{aligned}
\end{equation}\end{footnotesize}

\begin{figure*}
\begin{footnotesize}\begin{equation}\label{Eq:first_partial_drivative_of_profit}
  \left\{
\begin{aligned}
\frac{{\partial {{\Pi_{\rm{p}}}}}}{{\partial \bar h}} =& {{\bf{p}}^ \top }{\left( {{\bf{I}} - \alpha {\bf{G}}} \right)^{ - 1}}{\bf{1}} - \frac{a}{{{{\left( {1 - \bar h} \right)}^2}}} + \frac{1}{\gamma }\frac{T}{{{T_{\rm{0}}}}}{N_{\rm{T}}}r,\\
\frac{{\partial {\Pi_{\rm{P}}}}}{{\partial {\bf{p}}}} = & {\left( {{\bf{I}} - \alpha {\bf{G}}} \right)^{ - 1}}\left[ {\left( {1 + \bar h} \right){\bf{1}} - {\bf{p}}} \right],\\
\frac{{\partial {\Pi _{\rm{I}}}}}{{\partial \gamma }} =&  - \frac{1}{{{\gamma ^2}}}\frac{T}{{{T_{\rm{0}}}}}{N_{\rm{T}}}q\int\nolimits_{1/2}^1 {{{\left[ {1 - \int\nolimits_{1/2}^t {{\rm{P}}\left( \theta  \right){\rm{d}}\theta } } \right]}^{1/\gamma }}\ln \left[ {1 - \int\nolimits_{1/2}^t {{\rm{P}}\left( \theta  \right){\rm{d}}\theta } } \right]{\rm{d}}t}
- {\left( {\overline h  - \frac{1}{2}} \right)^3}\left[ {\left( {\beta  + 1} \right){\gamma ^\beta } - \beta {\gamma ^{\beta  - 1}}} \right],
\end{aligned}\right.
\end{equation}\end{footnotesize}
\begin{footnotesize}\begin{equation}\label{Eq:second_partial_drivative_of_profit}
  \left\{
\begin{aligned}
\frac{{{\partial ^2}{\Pi _{\rm{P}}}}}{{\partial {{\bf{p}}^2}}} &=  - {\left( {{\bf{I}} - \alpha {\bf{G}}} \right)^{ - 1}},\\
\frac{{{\partial ^2}{\Pi _{\rm{P}}}}}{{\partial {\bf{p}}\partial \bar h}} &= {\left( {{\bf{I}} - \alpha {\bf{G}}} \right)^{ - 1}}{\bf{1}},\\
\frac{{{\partial ^2}{\Pi _{\rm{P}}}}}{{\partial {\bf{p}}\partial \gamma }} &= {\left( {\frac{{{\partial ^2}{\Pi _{\rm{I}}}}}{{\partial \gamma \partial {\bf{p}}}}} \right)^ \top } = {\bf{0}},\\
\frac{{{\partial ^2}{\Pi _{\rm{P}}}}}{{\partial \bar h\partial {\bf{p}}}} &= {{\bf{1}}^ \top }{\left( {{\bf{I}} - \alpha {\bf{G}}} \right)^{ - 1}},\\
\frac{{{\partial ^2}{\Pi _{\rm{P}}}}}{{\partial {{\bar h}^2}}} &=  - \frac{{2a}}{{{{\left( {1 - \bar h} \right)}^3}}},\\
\frac{{{\partial ^2}{\Pi _{\rm{P}}}}}{{\partial \bar h\partial \gamma }} &= 0,
\end{aligned}\right.\qquad \qquad
\left\{
\begin{aligned}
\frac{{{\partial ^2}{\Pi _{\rm{I}}}}}{{\partial \gamma \partial \bar h}} =&  - 3{\left( {\overline h  - \frac{1}{2}} \right)^2}\left[ {\left( {\beta  + 1} \right){\gamma ^\beta } - \beta {\gamma ^{\beta  - 1}}} \right],\\
\frac{{{\partial ^2}{\Pi _{\rm{I}}}}}{{\partial {\gamma ^2}}} =& \frac{2}{{{\gamma ^3}}}\frac{T}{{{T_{\rm{0}}}}}{N_{\rm{T}}}q\int\nolimits_{1/2}^1 {{{\left[ {1 - \int\nolimits_{1/2}^t {{\rm{P}}\left( \theta  \right){\rm{d}}\theta } } \right]}^{1/\gamma }}\ln \left[ {1 - \int\nolimits_{1/2}^t {{\rm{P}}\left( \theta  \right){\rm{d}}\theta } } \right]{\rm{d}}t} \\
 &+ \frac{1}{{{\gamma ^4}}}\frac{T}{{{T_{\rm{0}}}}}{N_{\rm{T}}}q\int\nolimits_{1/2}^1 {{{\left[ {1 - \int\nolimits_{1/2}^t {{\rm{P}}\left( \theta  \right){\rm{d}}\theta } } \right]}^{1/\gamma }}{{\ln }^2}\left[ {1 - \int\nolimits_{1/2}^t {{\rm{P}}\left( \theta  \right){\rm{d}}\theta } } \right]{\rm{d}}t} \\
 &- {\left( {\overline h  - \frac{1}{2}} \right)^3}\left[ {\beta \left( {\beta  + 1} \right){\gamma ^{\beta  - 1}} - \beta \left( {\beta  - 1} \right){\gamma ^{\beta  - 2}}} \right].
\end{aligned}\right.
\end{equation}\end{footnotesize}
\end{figure*}

The first-order partial derivative of the profit of blockchain provider as well as that of the cyber-insurer are shown in~(\ref{Eq:first_partial_drivative_of_profit}), and the second-order partial derivatives of the blockchain provider and cyber-insurer are shown
in~(\ref{Eq:second_partial_drivative_of_profit}). Then, we can obtain the following theorem regarding the NE of the leader-level noncooperative subgame.
\begin{theorem}\label{Th:existence_for_leader_game}
There exists at least one NE in the leader-level noncooperative subgame ${\cal{G}}_{\rm{L}}$ if and only if \begin{footnotesize}$a >  \frac{1}{8}{{\bf{1}}^ \top }{{\left( {{\bf{I}} - \alpha {\bf{G}}} \right)}^{ - 1}}{\bf{1}}$\end{footnotesize}. Then, the Stackbelberg equilibrium of the market game exists.
\end{theorem}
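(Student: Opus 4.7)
The plan is to invoke the Debreu-Fan-Glicksberg existence theorem on the simultaneous-move leader-level subgame ${\cal{G}}_{\rm{L}}$ between the blockchain provider and the cyber-insurer, and then combine the resulting Nash equilibrium with the user-level equilibrium from Theorem~\ref{Th:set_belong_for_user_game} to obtain the full Stackelberg equilibrium. Three conditions must be verified: (i) the strategy sets ${\cal{D}}_{\rm{P}}$ and ${\cal{D}}_{\rm{I}}$ are non-empty, compact, and convex; (ii) both payoffs $\Pi_{\rm{P}}$ and $\Pi_{\rm{I}}$ are continuous in the joint strategy; and (iii) each player's payoff is concave in its own variables, for every fixed choice of the other player. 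Conditions (i) and (ii) are routine once the endpoints of $\bar h$, $\gamma$, and $p_i$ are closed, because the expressions in~(\ref{Eq:profit_of_provider_1}) and~(\ref{Eq:profit_of_insurer}) are smooth on the interior of the strategy space; the hypothesis on $a$ is engineered precisely to deliver~(iii) for the provider, so that is where the real work lies.

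To establish concavity of $\Pi_{\rm{P}}$ in $(\mathbf{p},\bar h)$ jointly, I would assemble the Hessian from the second derivatives already collected in~(\ref{Eq:second_partial_drivative_of_profit}),
\begin{equation*}
H_{\rm{P}}=\begin{pmatrix}-(\mathbf{I}-\alpha\mathbf{G})^{-1} & (\mathbf{I}-\alpha\mathbf{G})^{-1}\mathbf{1}\\ \mathbf{1}^{\top}(\mathbf{I}-\alpha\mathbf{G})^{-1} & -\dfrac{2a}{(1-\bar h)^{3}}\end{pmatrix},
\end{equation*}
and prove its negative definiteness by Schur complement. The leading $|{\cal{N}}|\times|{\cal{N}}|$ block $-(\mathbf{I}-\alpha\mathbf{G})^{-1}$ is negative definite because Assumption~\ref{Ass:existent_and_uniqueness_for_user_game} forces the eigenvalues of $\mathbf{I}-\alpha\mathbf{G}$ into $(0,1)$, exactly as exploited in Theorem~\ref{Th:existent_and_uniqueness_for_user_game}. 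The scalar Schur complement of this block then reduces to the inequality $\dfrac{2a}{(1-\bar h)^{3}}>\mathbf{1}^{\top}(\mathbf{I}-\alpha\mathbf{G})^{-1}\mathbf{1}$. Since $(1-\bar h)^{3}$ is maximised at the lower endpoint $\bar h=1/2$, requiring the inequality uniformly on $\bar h\in[1/2,1)$ rearranges to precisely the threshold $a>\tfrac{1}{8}\mathbf{1}^{\top}(\mathbf{I}-\alpha\mathbf{G})^{-1}\mathbf{1}$ claimed in the theorem. The ``only if'' direction is the contrapositive: if the bound fails at $\bar h=1/2$ the Schur complement loses strict sign, $H_{\rm{P}}$ becomes indefinite, $\Pi_{\rm{P}}$ is no longer concave in the provider's strategy, and the existence argument breaks down.

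Concavity of $\Pi_{\rm{I}}$ in $\gamma$ is the easier half and follows from inspecting $\partial^{2}\Pi_{\rm{I}}/\partial\gamma^{2}$ in~(\ref{Eq:second_partial_drivative_of_profit}): the factor $x(t)=1-\int_{1/2}^{t}{\rm{P}}(\theta)\,\mathrm{d}\theta$ lies in $(0,1)$, so $\ln x(t)<0$ makes the leading integral term strictly negative; the $\ln^{2}$ term, though positive, combines with the first into an expression proportional to $\ln x(t)\,[2\gamma+\ln x(t)]$ whose sign is negative for $\gamma>1$ on the admissible range; and the $\sigma$-contribution $-(\bar h-\tfrac12)^{3}[\beta(\beta+1)\gamma^{\beta-1}-\beta(\beta-1)\gamma^{\beta-2}]$ is non-positive because $(\beta+1)\gamma-(\beta-1)>0$ for $\gamma>1$ and $\beta>1$. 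With concavity on both sides, Debreu-Fan-Glicksberg delivers an NE $(\mathbf{p}^{*},\bar h^{*},\gamma^{*})$ of ${\cal{G}}_{\rm{L}}$, and substituting into $\mathbf{x}^{*}$ from~(\ref{Eq:optimal_solution_to_the_user_level_game}) completes the Stackelberg equilibrium. The step I expect to be the main obstacle is the Schur complement argument for $H_{\rm{P}}$: the off-diagonal block $(\mathbf{I}-\alpha\mathbf{G})^{-1}\mathbf{1}$ couples $\mathbf{p}$ and $\bar h$ non-trivially, and only the quantitative bound on $a$ makes the induced quadratic form uniformly negative over $\bar h\in[1/2,1)$; the remaining verifications are comparatively mechanical.
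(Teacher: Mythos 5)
Your overall route---verify concavity of $\Pi_{\rm P}$ in $\left[{\bf{p}}^\top,\bar h\right]^\top$ and of $\Pi_{\rm I}$ in $\gamma$, then invoke a Debreu-type existence theorem, and finally combine with the user-level NE from~(\ref{Eq:optimal_solution_to_the_user_level_game})---is the same as the paper's. The only structural difference is the linear-algebra step: you certify negative definiteness of ${\bf{H}}_{\rm P}$ via a Schur complement, whereas the paper splits off half of the $-\frac{2a}{(1-\bar h)^3}$ entry into a negative semidefinite part ${\bf{H}}_{{\rm P}_2}$ and applies a congruence transform to ${\bf{H}}_{{\rm P}_1}$. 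Your treatment of $\partial^2\Pi_{\rm I}/\partial\gamma^2$ (grouping the $\ln$ and $\ln^2$ terms into $\ln x\,(2\gamma+\ln x)$ with $x\ge\tfrac12$) is sound and parallels the paper's $\gamma\ge\ln 2$ condition.

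There is, however, a genuine arithmetic gap that then undermines your ``only if'' step. The Schur complement of the leading block of ${\bf{H}}_{\rm P}$ is $-\frac{2a}{(1-\bar h)^3}+{\bf{1}}^\top\left({\bf{I}}-\alpha{\bf{G}}\right)^{-1}{\bf{1}}$, and its worst case over $\bar h\in\left[\tfrac12,1\right)$ occurs at $\bar h=\tfrac12$, where $\frac{2a}{(1-\bar h)^3}=16a$; so your argument delivers negative definiteness under $a>\tfrac{1}{16}{\bf{1}}^\top\left({\bf{I}}-\alpha{\bf{G}}\right)^{-1}{\bf{1}}$, not ``precisely'' the stated threshold $a>\tfrac{1}{8}{\bf{1}}^\top\left({\bf{I}}-\alpha{\bf{G}}\right)^{-1}{\bf{1}}$ (the paper lands on $\tfrac18$ only because its split uses just half of the $2a$ term in the block it diagonalizes). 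For sufficiency this is harmless, since the theorem's condition implies your weaker one; but it breaks your contrapositive: for $a$ strictly between $\tfrac{1}{16}$ and $\tfrac18$ of ${\bf{1}}^\top\left({\bf{I}}-\alpha{\bf{G}}\right)^{-1}{\bf{1}}$, your own Schur bound shows ${\bf{H}}_{\rm P}$ is still negative definite and an NE still exists, contradicting the claimed ``only if'' at $\tfrac18$. Moreover, even where concavity does fail, that only disables this particular existence argument---it is not a proof of non-existence---so the ``only if'' direction is not actually established by your argument (nor, in fairness, by the paper's, which also proves sufficiency only). A final small point: the Debreu-type theorem requires compact strategy sets, and the domains as defined ($\bar h\in\left[\tfrac12,1\right)$, $0<p_i\le p^{\rm u}$, $1<\gamma\le\gamma^{\rm u}$) are only half-open; you flag this but, like the paper, do not resolve it.
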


\begin{proof}
Please refer to Appendix~\ref{subsec:existence_leader_game} for the proof.
\end{proof}

Next, we show the uniqueness of the NE in the leader-level noncooperative subgame in Theorem~\ref{Th:uniqueness_for_leader_game}, and hence the Stackbelberg equilibrium is unique.
\begin{theorem}\label{Th:uniqueness_for_leader_game}
The NE in the leader-level noncooperative subgame ${\cal{G}}_{\rm{L}}$ is unique if and only if \begin{footnotesize}$a >  \frac{{9{{\left( {\beta  + 1} \right)}^2}{{\left( {{\gamma ^u}} \right)}^{\beta  + 1}}}}{{128\beta }}$\end{footnotesize} and hence the Stackbelberg equilibrium is unique.
\end{theorem}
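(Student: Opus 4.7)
The plan is to invoke Rosen's uniqueness theorem for concave $n$-person games, which requires the symmetric part of the Jacobian $J$ of the pseudo-gradient of ${\cal{G}}_{\rm{L}}$ with respect to the joint strategy $(\mathbf{p},\bar h,\gamma)$ to be uniformly negative definite on ${\cal{D}}_{\rm{P}}\times {\cal{D}}_{\rm{I}}$. The first-order partials in~(\ref{Eq:first_partial_drivative_of_profit}) already furnish the pseudo-gradient, and every block of $J$ can be read off from~(\ref{Eq:second_partial_drivative_of_profit}). Writing $A=(\mathbf{I}-\alpha\mathbf{G})^{-1}$, the matrix $J$ contains $-A$ in the $(\mathbf{p},\mathbf{p})$ block, $A\mathbf{1}$ in the $(\mathbf{p},\bar h)$ coupling (which is symmetric with its transpose), $\mathbf{0}$ for every $(\mathbf{p},\gamma)$ entry, and a trailing $2\times 2$ block in $(\bar h,\gamma)$ whose only asymmetry is the vanishing of $\partial^2\Pi_{\rm{p}}/\partial\bar h\partial\gamma$ against a nonzero $\partial^2\Pi_{\rm{I}}/\partial\gamma\partial\bar h$. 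Consequently $(J+J^\top)/2$ inherits this pattern, with the $(\bar h,\gamma)$ off-diagonal replaced by $\tfrac{1}{2}\partial^2\Pi_{\rm{I}}/\partial\gamma\partial\bar h$.

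Under Assumption~\ref{Ass:existent_and_uniqueness_for_user_game} the block $-A$ is negative definite, so a Schur-complement reduction compresses the $(|{\cal{N}}|+2)$-dimensional definiteness check into one on a single $2\times 2$ matrix $M$ whose $(\bar h,\bar h)$ entry is $-\frac{2a}{(1-\bar h)^3}+\mathbf{1}^\top A\mathbf{1}$, whose $(\gamma,\gamma)$ entry is $\partial^2\Pi_{\rm{I}}/\partial\gamma^2$, and whose off-diagonal is $\tfrac{1}{2}\partial^2\Pi_{\rm{I}}/\partial\gamma\partial\bar h$. The existence threshold $a>\tfrac{1}{8}\mathbf{1}^\top A\mathbf{1}$ from Theorem~\ref{Th:existence_for_leader_game} already ensures $M_{\bar h\bar h}<0$, so the only substantive task left is to show $\det M>0$ uniformly on $[\tfrac{1}{2},1)\times(1,\gamma^{\rm{u}}]$.

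The hard part will be establishing $\det M>0$ at the worst joint strategy, because both the squared off-diagonal and the product of the diagonal magnitudes depend nontrivially on $(\bar h,\gamma)$. My strategy is to split the extremization by variable. On the $\bar h$ side, isolating the factor $(1-\bar h)^3/(2a)$ against the squared off-diagonal exposes an auxiliary function of the form $(\bar h-\tfrac{1}{2})^4(1-\bar h)^3$, whose one-variable maximum occurs at an interior point and contributes the numerical constant $\tfrac{9}{128}$. On the $\gamma$ side, for $\gamma$ bounded away from $1$ the log-integral terms in $\partial^2\Pi_{\rm{I}}/\partial\gamma^2$ are dominated by the polynomial $\sigma$-penalty $\beta(\beta+1)\gamma^{\beta-1}-\beta(\beta-1)\gamma^{\beta-2}$, so the worst case of $[(\beta+1)\gamma^\beta-\beta\gamma^{\beta-1}]^2/|\partial^2\Pi_{\rm{I}}/\partial\gamma^2|$ occurs at $\gamma=\gamma^{\rm{u}}$ and is controlled at leading order by $(\beta+1)^2(\gamma^{\rm{u}})^{\beta+1}/\beta$. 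Assembling both bounds yields the threshold $a>\frac{9(\beta+1)^2(\gamma^{\rm{u}})^{\beta+1}}{128\beta}$, which is sufficient by direct substitution; necessity follows from the tightness of these worst-case computations, since at equality $\det M$ vanishes at the extremizing strategy and diagonal strict concavity collapses along the corresponding null direction, so multiple equilibrium candidates can be generated by an arbitrarily small perturbation. Rosen's theorem then delivers uniqueness of the NE in ${\cal{G}}_{\rm{L}}$, and combined with Theorem~\ref{Th:existent_and_uniqueness_for_user_game} this establishes uniqueness of the Stackelberg equilibrium of the entire market game.
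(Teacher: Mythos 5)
Your overall route---Rosen's theorem applied to the symmetrized Jacobian, negative definiteness of the $(\mathbf{p},\mathbf{p})$ block under Assumption~\ref{Ass:existent_and_uniqueness_for_user_game}, and a reduction to a low-dimensional condition in $(\bar h,\gamma)$---is the same skeleton as the paper's proof, which, instead of your Schur complement, writes the symmetrized Jacobian as ${\bf{J}}={\bf{J}}_1+{\bf{J}}_2$ with ${\bf{J}}_1$ negative definite (using the existence condition of Theorem~\ref{Th:existence_for_leader_game} and the log-integral concavity, i.e.\ $\gamma\ge\ln 2$) and ${\bf{J}}_2$ negative semidefinite via a congruence transformation. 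However, there are two genuine gaps. First, your derivation of the constant $\frac{9}{128}$ does not work as described: after dividing the determinant inequality by $\left(\bar h-\frac{1}{2}\right)^3$ the $\bar h$-dependent factor multiplying the squared cross-term is $\left(\bar h-\frac{1}{2}\right)\left(1-\bar h\right)^3$, not $\left(\bar h-\frac{1}{2}\right)^4\left(1-\bar h\right)^3$, and its interior maximum (at $\bar h=\frac{5}{8}$, value $\frac{27}{4096}$) does not produce $\frac{9}{128}$. The paper's $\frac{9}{128}$ arises from two \emph{separate, non-simultaneous} endpoint bounds---evaluating $-\frac{2a}{(1-\bar h)^3}$ at $\bar h=\frac{1}{2}$ (giving $-16a$) while bounding $\bar h-\frac{1}{2}<\frac{1}{2}$---so a tight joint extremization of the kind you propose would yield a different (smaller) threshold and would not reproduce the statement as given. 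Second, your ``only if'' argument fails: Rosen's diagonal strict concavity is only a sufficient condition for uniqueness, so even if the symmetrized Jacobian loses definiteness at some strategy profile, no multiplicity of equilibria follows; moreover, because the stated threshold is built from loose bounds that are not attained simultaneously, equality in the condition does not make $\det M$ vanish anywhere, so the claimed ``collapse along a null direction'' never materializes. (The paper, despite the ``if and only if'' phrasing, likewise only establishes sufficiency.)

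A further point to be careful about if you pursue the determinant route: at $\bar h=\frac{1}{2}$ both the cross-term $\frac{\partial^2\Pi_{\rm{I}}}{\partial\gamma\partial\bar h}$ and the $\sigma$-part of $\frac{\partial^2\Pi_{\rm{I}}}{\partial\gamma^2}$ vanish, so strict negativity in the $\gamma$-direction must come from the log-integral terms you propose to discard; your ``split the extremization by variable'' step glosses over this. The paper's additive decomposition avoids the issue structurally, since strictness is carried entirely by ${\bf{J}}_1$ (which contains the log-integral terms) while ${\bf{J}}_2$, containing the $\sigma$-terms and the cross-coupling, is only required to be negative semidefinite.
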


\begin{proof}
Please refer to Appendix~\ref{subsec:uniqueness_leader_game} for the proof.
\end{proof}

\subsection{Equilibrium Searching for Leader-Level Noncooperative Subgame}
\label{sub:sec:algorithm}
\begin{algorithm}[t]
  \begin{footnotesize}
 \caption{Iterative best-response for searching leader noncooperative subgame NE}
 \begin{algorithmic}[1]
 \REQUIRE
 Select any feasible initial strategies as $\mathbf{s}_{\textrm{I}}(0)=\gamma(0)$, $\mathbf{s}_{\textrm{P}}(0)=[\mathbf{p}(0), {\overline{h}}(0)]^{\top}$ and set $t=0$.
 \WHILE {$\mathbf{p}(0), \overline{h}(0), \gamma(0)$ do not satisfy the termination condition}
  \FORALL {$k=\{\textrm{P}, \textrm{I}\}$}
  \STATE Set the adversary joint strategies as
  \begin{equation}
   \label{eq_set_adversary}
    \mathbf{s}_k(t+1)=\arg\max_{\mathbf{s}_k}\Pi_k(\mathbf{s}_k, \mathbf{s}_{-k}(t)) \textrm{  s.t. } \mathbf{s}_k\in\mathcal{D}_k,
  \end{equation}
  where $\mathbf{s}_{-k}$ represents the adversary's strategy.
  \ENDFOR
  \STATE Set $t\leftarrow t+1$.
 \ENDWHILE
 \end{algorithmic}
 \label{alg_iterative}
\end{footnotesize}
\end{algorithm}

Since in our previous discussion we have provided the closed-form NE to the user-level noncooperative subgame ${\cal{G}}_{\rm{u}}$ in (\ref{Eq:optimal_solution_to_the_user_level_game}), we only have to focus on the derivation of the NE in the leader-level noncooperative subgame ${\cal{G}}_{\rm{L}}$. Given (\ref{Eq:optimal_solution_to_the_user_level_game}), the search for the SE is reduced to the search of the NE of a two-player noncooperative game. By Theorem~\ref{Th:existence_for_leader_game}, we know that ${\cal{G}}_{\rm{L}}$ is a  concave game with the convex and compact strategy space (see also the proof to Theorem~\ref{Th:existence_for_leader_game}). By Theorem~\ref{Th:uniqueness_for_leader_game}, we know that ${\cal{G}}_{\rm{L}}$ admits a unique NE when the condition given therein is satisfied. This suggests the use of the iterative best response to solve for the NE in ${\cal{G}}_{\rm{L}}$.  The iterative best-response algorithm is described in Algorithm~\ref{alg_iterative}, and its convergence property is guaranteed by Theorem~\ref{thm_convergence}.
\begin{theorem}\label{thm_convergence}
If the condition in Theorem~\ref{Th:uniqueness_for_leader_game} is satisfied, Algorithm~\ref{alg_iterative} converges to the unique SE from anywhere of the strategy domains of the blockchain provider and cyber-insurer.
\end{theorem}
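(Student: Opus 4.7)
The plan is to exploit a striking algebraic feature of $\mathcal{G}_{\mathrm{L}}$ that is already visible in~(\ref{Eq:second_partial_drivative_of_profit}): the cross partials $\partial^{2}\Pi_{\mathrm{P}}/\partial\mathbf{p}\,\partial\gamma$ and $\partial^{2}\Pi_{\mathrm{P}}/\partial\bar h\,\partial\gamma$ both vanish. Tracing this back to~(\ref{Eq:profit_of_provider_1}), the reason is that the premium term $\tfrac{T}{T_{0}}N_{\mathrm{T}}q\int_{1/2}^{1}[1-\int_{1/2}^{t}\mathrm{P}(\theta)\mathrm{d}\theta]^{1/\gamma}\mathrm{d}t$ uses $t$ as a dummy variable and hence depends only on $\gamma$, entering $\Pi_{\mathrm{P}}$ as an additive constant with respect to $(\mathbf{p},\bar h)$. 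So the first step in my argument is to conclude that the provider's best-response correspondence is constant in $\gamma$: $\mathrm{BR}_{\mathrm{P}}(\gamma)\equiv(\mathbf{p}^{*},\bar h^{*})$, where $(\mathbf{p}^{*},\bar h^{*})$ is the unique maximizer of $\Pi_{\mathrm{P}}$ over the convex compact set $\mathcal{D}_{\mathrm{P}}$. Existence and uniqueness of this maximizer come from the strict concavity of $\Pi_{\mathrm{P}}$ in $(\mathbf{p},\bar h)$ that is established inside the proof of Theorem~\ref{Th:existence_for_leader_game}.

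Second, I will invoke Theorem~\ref{Th:uniqueness_for_leader_game} to ensure that $\Pi_{\mathrm{I}}$ is strictly concave in $\gamma$ on the compact interval $\mathcal{D}_{\mathrm{I}}$, so that the insurer's best response $\mathrm{BR}_{\mathrm{I}}(\mathbf{p},\bar h)$ is single-valued as well; by direct inspection of~(\ref{Eq:profit_of_insurer}) it depends on $\bar h$ alone and not on $\mathbf{p}$. Armed with these two observations I will simply trace the iterates produced by Algorithm~\ref{alg_iterative} starting from an arbitrary feasible $(\mathbf{p}(0),\bar h(0),\gamma(0))$. At iteration $t=1$, the update rule~(\ref{eq_set_adversary}) yields $(\mathbf{p}(1),\bar h(1))=\mathrm{BR}_{\mathrm{P}}(\gamma(0))=(\mathbf{p}^{*},\bar h^{*})$ and some $\gamma(1)=\mathrm{BR}_{\mathrm{I}}(\mathbf{p}(0),\bar h(0))$. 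At $t=2$ the provider remains frozen at $(\mathbf{p}^{*},\bar h^{*})$ and the insurer plays $\gamma(2)=\mathrm{BR}_{\mathrm{I}}(\mathbf{p}^{*},\bar h^{*})=\gamma^{*}$, where the last equality holds because, by Theorem~\ref{Th:uniqueness_for_leader_game}, $(\mathbf{p}^{*},\bar h^{*},\gamma^{*})$ is the unique NE of $\mathcal{G}_{\mathrm{L}}$ and therefore satisfies the mutual best-response fixed-point relations. For every $t\geq 2$ the trajectory sits at this NE, and combining it with the closed-form follower response~(\ref{Eq:optimal_solution_to_the_user_level_game}) produces the unique SE of the two-stage market game, starting from any initialization.

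The hard part is really just noticing and cleanly justifying the decoupling: once $\Pi_{\mathrm{P}}$ is recognized as additively separable in $(\mathbf{p},\bar h)$ and $\gamma$, convergence from any initial point is automatic and in fact occurs in at most two outer iterations, so no contraction estimate, Lyapunov function, or Rosen-type diagonal strict concavity argument is required. The only other detail I will check in passing is that the uniqueness hypothesis of Theorem~\ref{Th:uniqueness_for_leader_game} is in force alongside the existence hypothesis of Theorem~\ref{Th:existence_for_leader_game}, so that both best-response maps invoked in the trajectory above are simultaneously well defined on the whole domain.
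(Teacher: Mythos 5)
Your proof is correct for the model as the paper defines it, but it takes a genuinely different route from the paper's. The paper proves Theorem~\ref{thm_convergence} in one line: concavity of $\Pi_{\rm P}$ and $\Pi_{\rm I}$ (Appendix~\ref{subsec:existence_leader_game}) together with negative definiteness of the Jacobian $\mathbf{J}$ (Appendix~\ref{subsec:uniqueness_leader_game}) make ${\cal{G}}_{\rm{L}}$ a diagonally strictly concave (monotone) game, and convergence of the iterative best response is then imported from Theorem~10 of~\cite{scutari2014real}. You instead exploit the separable structure: the premium term in~(\ref{Eq:profit_of_provider_1}) is the only $\gamma$-dependent part of $\Pi_{\rm P}$ and is constant in $\left({\bf{p}},\bar h\right)$, so the provider's best response does not depend on $\gamma$, while $\Pi_{\rm I}$ does not involve ${\bf{p}}$; hence the Jacobi-type iteration of Algorithm~\ref{alg_iterative} reaches the unique NE after at most two outer rounds from any initialization. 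Your route buys an elementary, self-contained argument with an explicit finite-step convergence bound (and, as a byproduct, an independent uniqueness proof), whereas the paper's route buys robustness: it only needs $\mathbf{J}\prec 0$, not exactly vanishing cross-partials, so it would survive a model in which $\Pi_{\rm P}$ genuinely couples to $\gamma$ --- note in this respect that $\partial\Pi_{\rm P}/\partial\bar h$ in~(\ref{Eq:first_partial_drivative_of_profit}) carries a $1/\gamma$ factor that contradicts~(\ref{Eq:profit_of_provider_1}) and the zero cross-partials in~(\ref{Eq:second_partial_drivative_of_profit}); your decoupling argument is valid only under the latter (evidently intended) reading, and would collapse under the former, while the monotonicity argument would not. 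Two small points to tighten: strict concavity of $\Pi_{\rm I}$ in $\gamma$ is established in the proof of Theorem~\ref{Th:existence_for_leader_game}, not by Theorem~\ref{Th:uniqueness_for_leader_game}; and, exactly as in the paper's own proof, single-valuedness of the provider's best response requires the condition of Theorem~\ref{Th:existence_for_leader_game}, i.e., $a>\frac{1}{8}{\bf{1}}^\top\left({\bf{I}}-\alpha{\bf{G}}\right)^{-1}{\bf{1}}$, in addition to the condition of Theorem~\ref{Th:uniqueness_for_leader_game} --- a joint hypothesis you rightly flag and which the paper assumes implicitly as well.
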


\begin{proof}
With the concavity of the payoff functions $\Pi_{\textrm{P}}$ and $\Pi_{\textrm{I}}$ proved in Appendix~\ref{subsec:existence_leader_game} and the negative definite Jacobian matrix $\mathbf{J}$ proved in Appendix~\ref{subsec:uniqueness_leader_game}, Theorem~\ref{thm_convergence} immediately follows Theorem 10 in~\cite{scutari2014real}.
\end{proof}

\section{Performance Evaluation}
\label{sec:performance}

In this section, we conduct extensive numerical simulations to evaluate the performance of the market entities at the equilibrium in each stage. We consider a group of $\left|{\cal{N}}\right|$ users in the blockchain service market. The off-diagonal elements of social externality matrix ${\bf{G}}$, i.e., $g_{ij}$, $\forall i \ne j$, is generated following the uniform distribution over the interval of $\left[0, 10\right]$. The domain of definition for $\alpha$ is set as $\left[5\times10^{-4}, 8 \times 10^{-4}\right]$ according to the parameter setting in~\cite{bloch2013pricing}. The other default coefficients are given as follows: $\beta=10$, $p^{\rm{u}}=1$, $\gamma^{\rm{u}}=2$, $T=100$, $T_{\rm{0}}=10$, $N_{\rm{T}}=100$, $r=10$, $q=10$ and $a=100$. Note that the price that we present in the figures of simulation results is the mean value of the discriminatory prices. The triple integral in the premium-related term in the profit functions of the blockchain provider and the cyber-insurer, i.e.,~(\ref{Eq:profit_of_provider}) and~(\ref{Eq:profit_of_insurer}), is calculated using the method of rectangular integral with $100$ as the number of intervals. Note that our proposed concept of cyber-insurer and blockchain service is the first in the literature. There is no similar work with which we can compare on a reasonably fair basis. For example, the authors in~\cite{zhang2017bi} formulated a bilevel game to investigate the interactions among the attackers, users, and cyber-insurer in computer networks. However, since we incorporate the specific and unique feature of the blockchain technology, i.e., successful attack probability in~(\ref{Eq:attack_probability}), reasonable comparison is therefore not applicable.

\subsection{Numerical Results}
\label{subsec:numerical_result}

\subsubsection{Demonstration of best response and NE}
\label{subsubsec:ne_and_best_response}

\begin{figure}[!]
\centering
\begin{minipage}{4.4cm}
\centering
\includegraphics[width=1\textwidth,trim=5 0 15 5,clip]{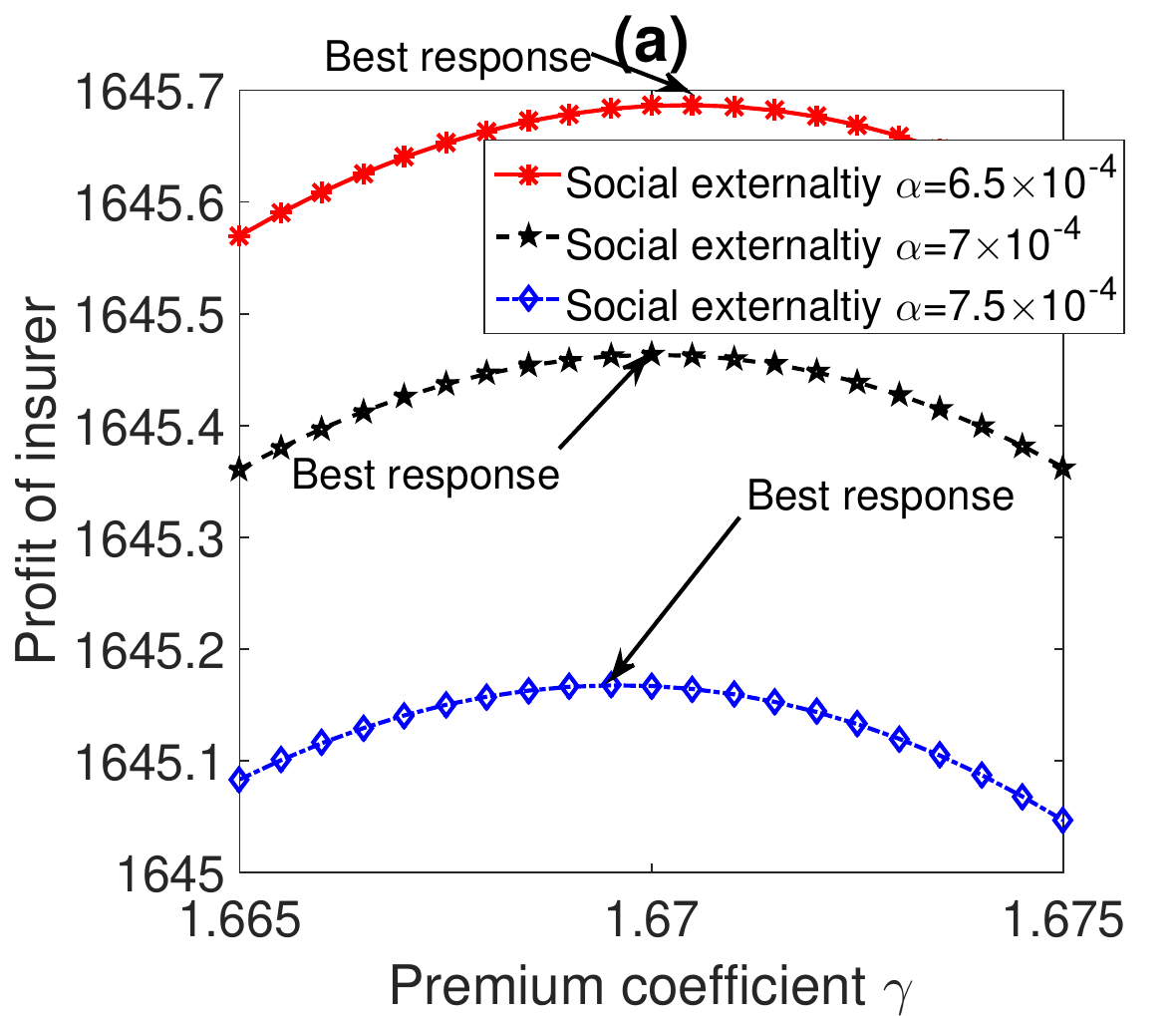}
\end{minipage}
\begin{minipage}{4.4cm}
\centering
\includegraphics[width=1\textwidth,trim=5 0 15 5,clip]{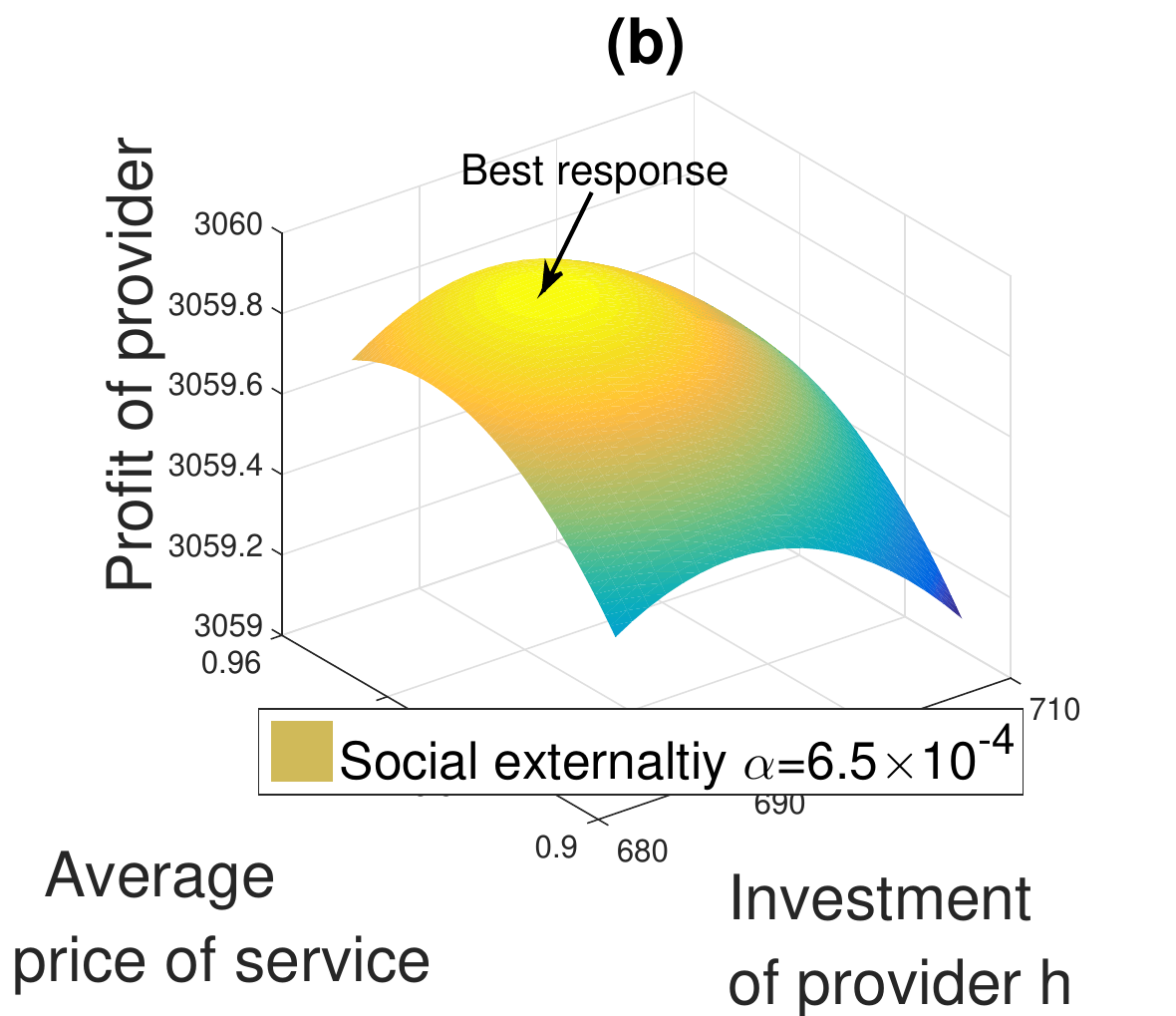}
\end{minipage}
\begin{minipage}{4.4cm}
\centering
\includegraphics[width=1\textwidth,trim=5 0 15 5,clip]{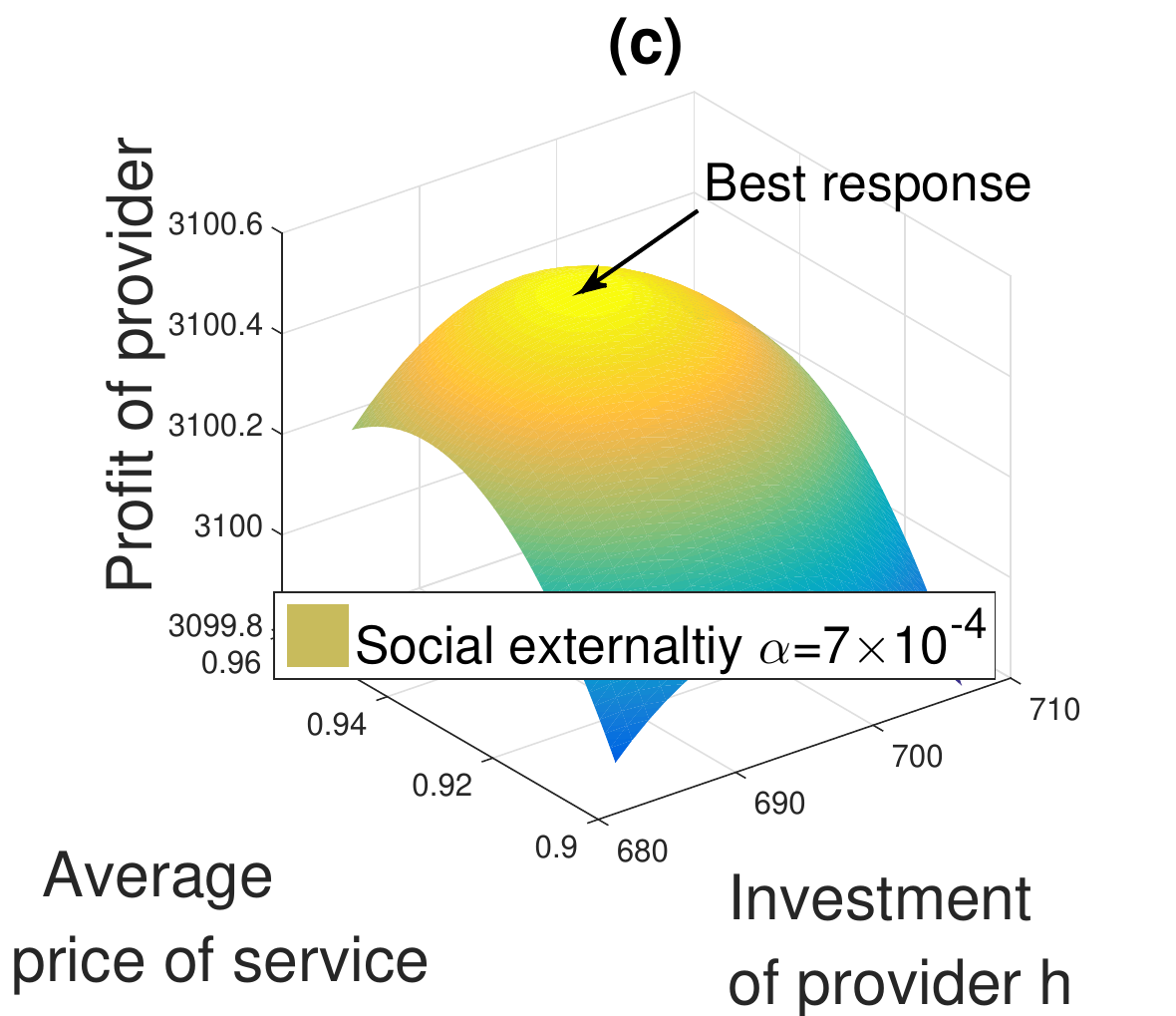}
\end{minipage}
\begin{minipage}{4.4cm}
\centering
\includegraphics[width=1\textwidth,trim=5 0 15 5,clip]{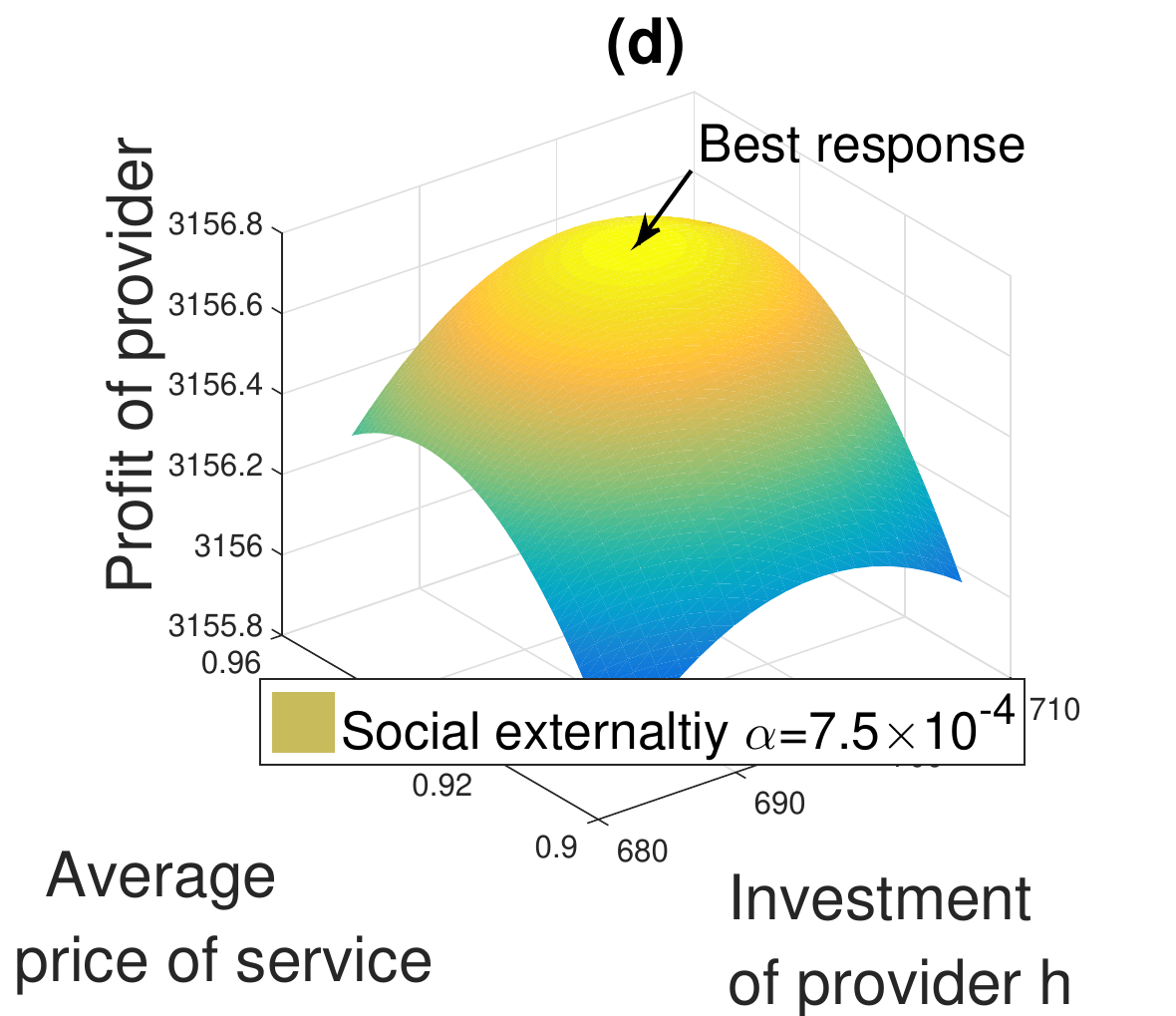}
\end{minipage}
\caption{Best response}
\label{fig:best_response}
\end{figure}

Figure~\ref{fig:best_response} demonstrates the NE obtained from interative best responses with the given simulation parameters. In Fig.~\ref{fig:best_response}(a), the profit of the cyber-insurer changes with the different premium charged to the blockchain provider. According to our previous discussion, the profit of the cyber-insurer is controlled by the value of $\gamma$. For a given weak level of the social externality, e.g., $\alpha=6.5\times10^{-4}$, there is a corresponding strategy where the cyber-insurer's profit is maximized. The strategy is marked by the arrowhead of ``Best response'' and constitutes the NE strategy of the cyber-insurer under the given social externality setting. We observe from Fig.~\ref{fig:best_response}(a) that as a function of the premium coefficient $\gamma$, the profit is unimodal, and thus the optimal solution can be obtained analytically. Similarly, there exists the optimal point in each of Figs.~\ref{fig:best_response}(b), (c), and (d) for the blockchain provider. The optimal points are marked by the arrowhead of ``Best response'' and constitute the NE strategy of  the blockchain provider under different levels of social externality. Note that all the results shown in the figures are obtained given that the users play the NE in the user-level game.

\begin{figure}[!]
\centering
\includegraphics[width=0.4\textwidth,trim=5 5 5 5,clip]{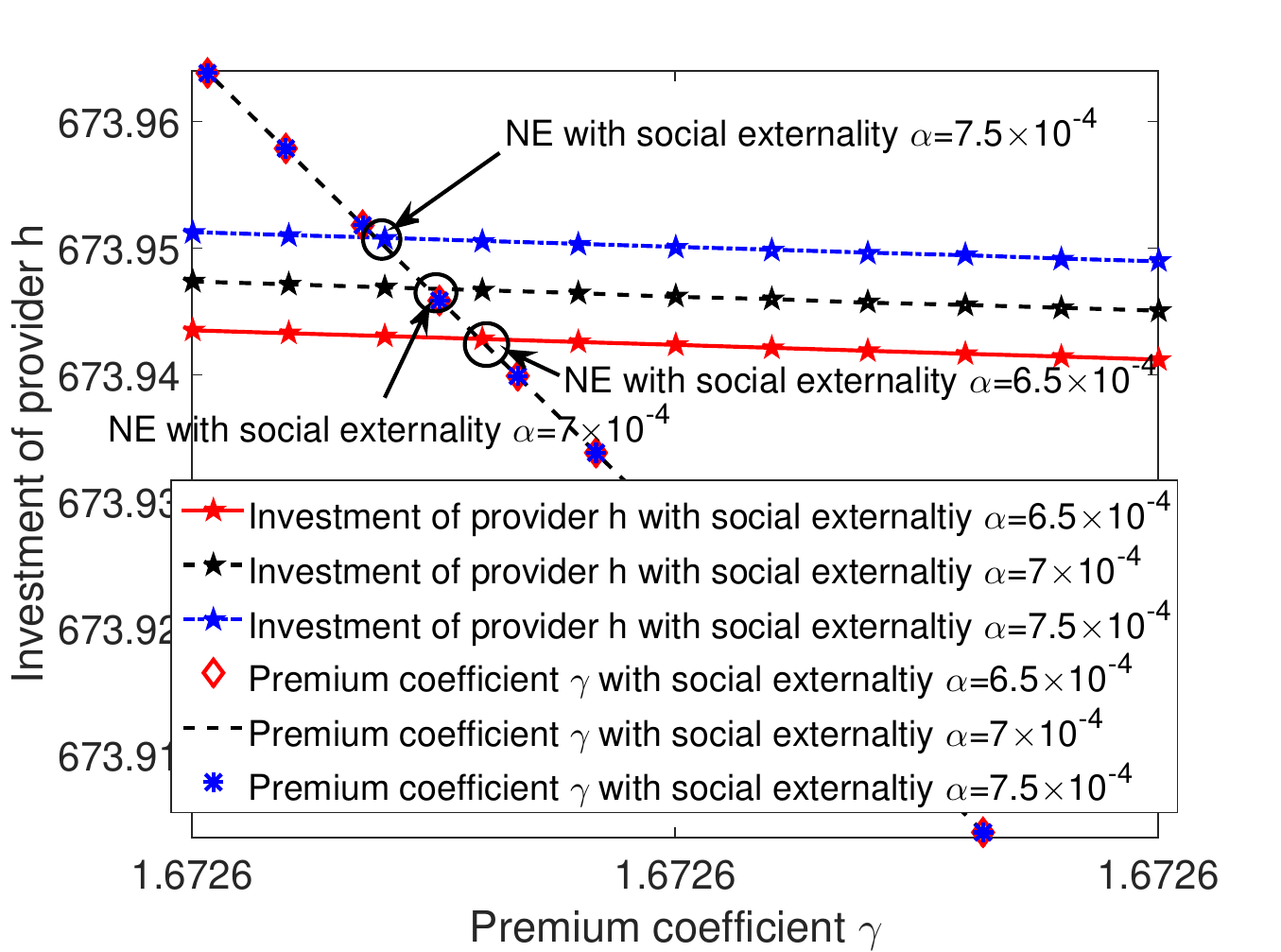}
\caption{Nash equilibrium}
\label{fig:ne}
\end{figure}

Figure~\ref{fig:ne} illustrates the equilibrium strategies for the investment of the blockchain provider and the premium coefficient $\gamma$ for the cyber-insurer under different levels of social externality. The NE is the point at which the best responses for the blockchain provider and cyber-insurer intersect. Again, this is given that the users play the NE in the user-level game. Under different levels of social externality, different NE are observed. As the level of social externality grows higher, i.e., a decision of one user has stronger effect to the decisions of other users, the other users are more likely to buy the same service if one user buys the service. As expected, when the level of social externality grows higher, the users are more likely to also buy the service, the blockchain provider has more money and more incentives to invest in the infrastructure and accordingly the premium coefficient $\gamma$ decreases (see Fig.~\ref{fig:ne}). The reason for this result is explained in the subsequent discussions.

\subsubsection{The impact of the number of users}
\label{subsubsec:impact_of_number_of_user}

\begin{figure}[!]
 \centering
 \includegraphics[width=0.5\textwidth,trim=100 15 115 20,clip]{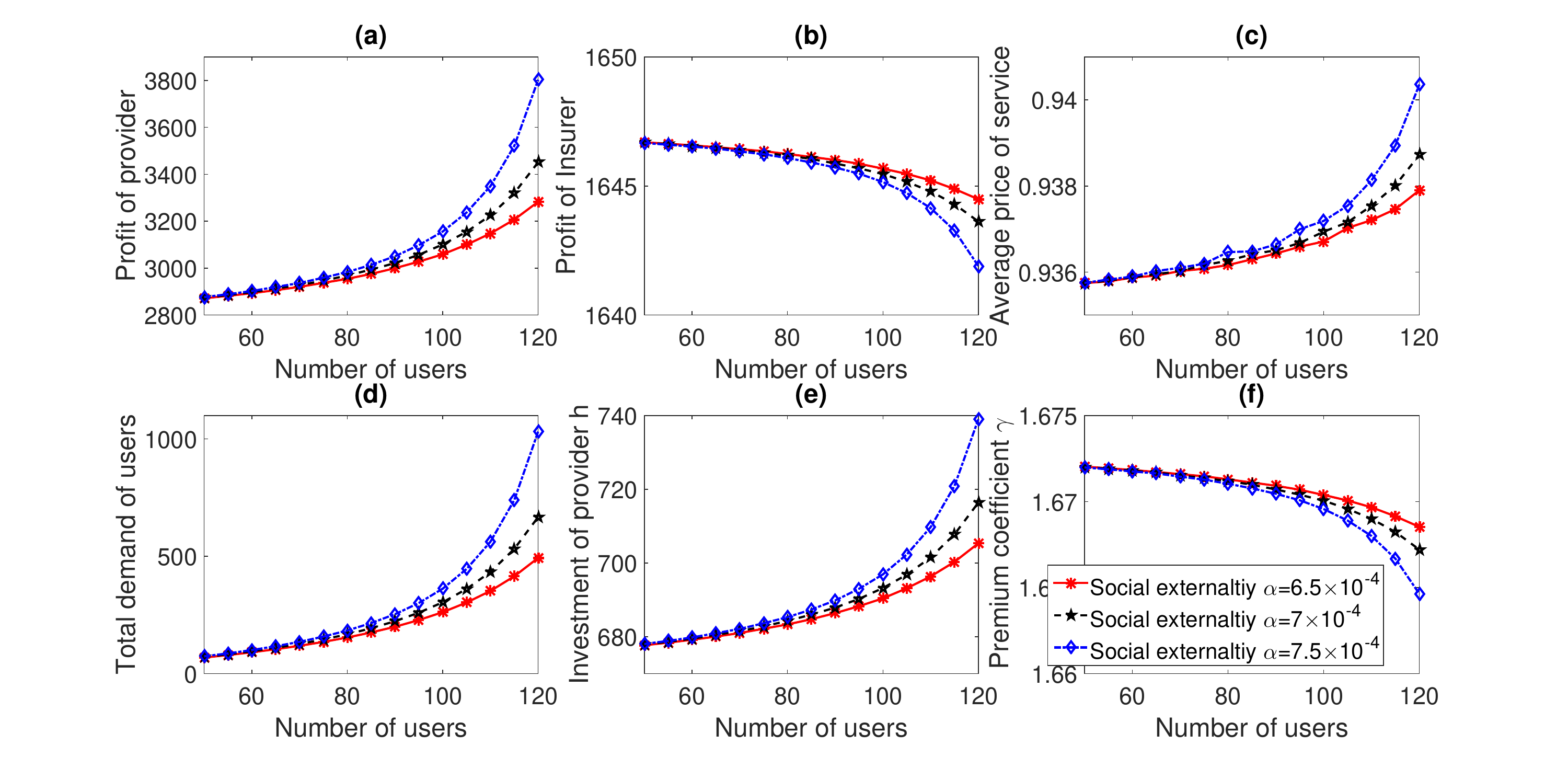}
 \caption{The results with increasing number of users.}
 \label{fig:Number_of_Users}
\end{figure}

We first evaluate the impacts by the number of users on the payoff of the market entities in Fig.~\ref{fig:Number_of_Users}, where the number of users increases from $50$ to $120$. Then, we evaluate the performance under three levels of social externality, e.g., $6.5 \times 10^{-4}$, $7 \times 10^{-4}$, and $7.5 \times 10^{-4}$ which represent weak, medium, and strong levels of social externality, respectively. As expected, the profit of the blockchain provider increases significantly when the social externality becomes stronger. As the number of users increases, the profit of the blockchain provider also increases under the given social externality settings. Moreover, the increase of the profit of the blockchain provider becomes larger when the social externality is stronger. Intuitively, the reason is that the social externality stimulates the demand of each user. In return, the blockchain provider can raise the price of service accordingly and hence improves its profit. Meanwhile, as the social externality becomes stronger, the blockchain provider also achieves greater profit. This indicates that the attack may incur more loss to the blockchain provider. Therefore, the blockchain provider has a higher incentive to invest in the infrastructure to prevent double-spending attack. This explains the result that the investment by the blockchain provider increases at a higher rate with $\alpha=7.5 \times 10^{-4}$ than with $\alpha=6.5 \times 10^{-4}$. As a result, the stronger social externality reduces the successful attack probability at a higher rate. Correspondingly, as shown in Fig.~\ref{fig:Number_of_Users}(f), the premium coefficient $\gamma$ decreases at a higher rate with the stronger social externality. This result indicates that the premium also decreases at a higher rate with stronger social externality. Thus, the cyber-insurer's profit decreases at a higher rate when the social externality becomes stronger as shown in Fig.~\ref{fig:Number_of_Users}(b).

\subsubsection{The impact of social externality}
\label{subsubsec:impact_of_social_externality}

\begin{figure}[!]
 \centering
 \includegraphics[width=0.5\textwidth,trim=100 15 120 20,clip]{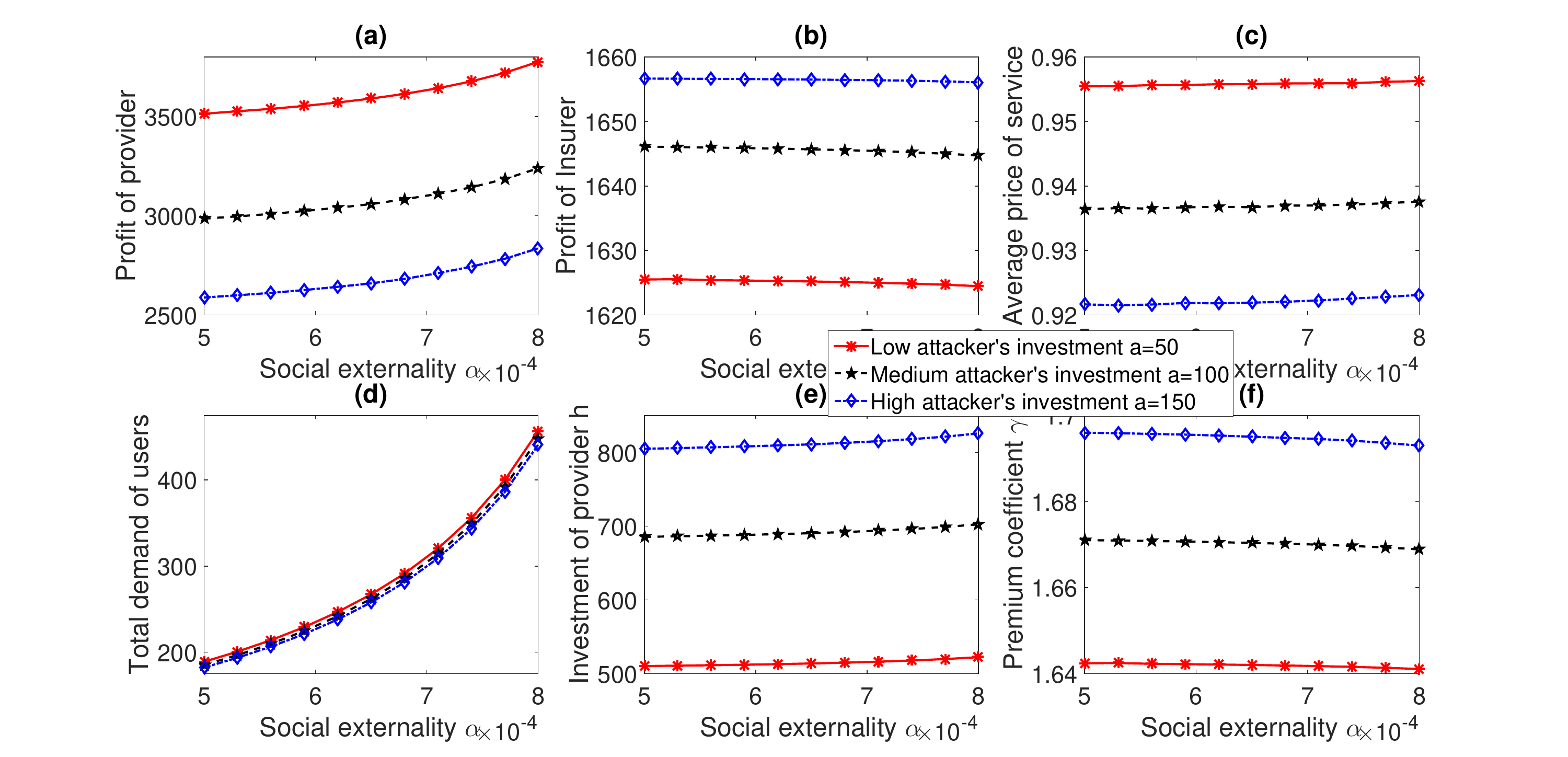}
 \caption{The results with increasing social externality.}
 \label{fig:Social_Externality}
\end{figure}

Figure~\ref{fig:Social_Externality} illustrates the impact of social externality on the payoffs of the three market entities with $100$ users. As expected, the users' total service demand increases as the social externality becomes stronger (see Fig.~\ref{fig:Social_Externality}(d)). From Figs.~\ref{fig:Social_Externality}(a) and~(b), we observe that the profit of the blockchain provider increases while the profit of the cyber-insurer decreases. The reason is that the users with stronger social externality are more sensitive to the security level of the blockchain service, and the security level depends on the investment ratio of the blockchain provider. Therefore, the blockchain provider will raise its investment as the social externality increases, which decreases the probability of successful double-spending attacks. Accordingly, the cyber-insurer reduces its premium as shown in Fig.~\ref{fig:Social_Externality}(f), and this will lead to the decrease of the cyber-insurer's profit. Moreover, since the investment by the blockchain provider increases, the security level of the blockchain service is improved, and thus the service demand of users increases. This situation results in the increase of the blockchain provider's profit.

Additionally, recall that the investment ratio of the blockchain provider depends on not only the investment from the blockchain provider, but also the computing resource owned by the attackers. For this reason, in our performance evaluation, we also vary the attacker's investment in computing resource by considering three cases with $a=50$, $a=100$ and $a=150$, respectively. From Fig.~\ref{fig:Social_Externality}(f), we observe that the decreasing rate of the curves becomes higher as the attacker's investment increases, which means that decreasing rate of the premium becomes higher as the attacker's investment increases. The reason is that the increasing rate of the investment by the blockchain provider becomes faster as the attacker's investment increases, and this also results in a higher decreasing rate of the successful attack probability. Consequently, the decreasing rate of the probability of paying claim is the highest with $a=150$ which is the largest attacker's investment.

\subsubsection{The impact of attacker's computing resource}
\label{subsubsec:impact_of_attacker_investment}

\begin{figure}[!]
 \centering
 \includegraphics[width=0.5\textwidth,trim=90 15 115 20,clip]{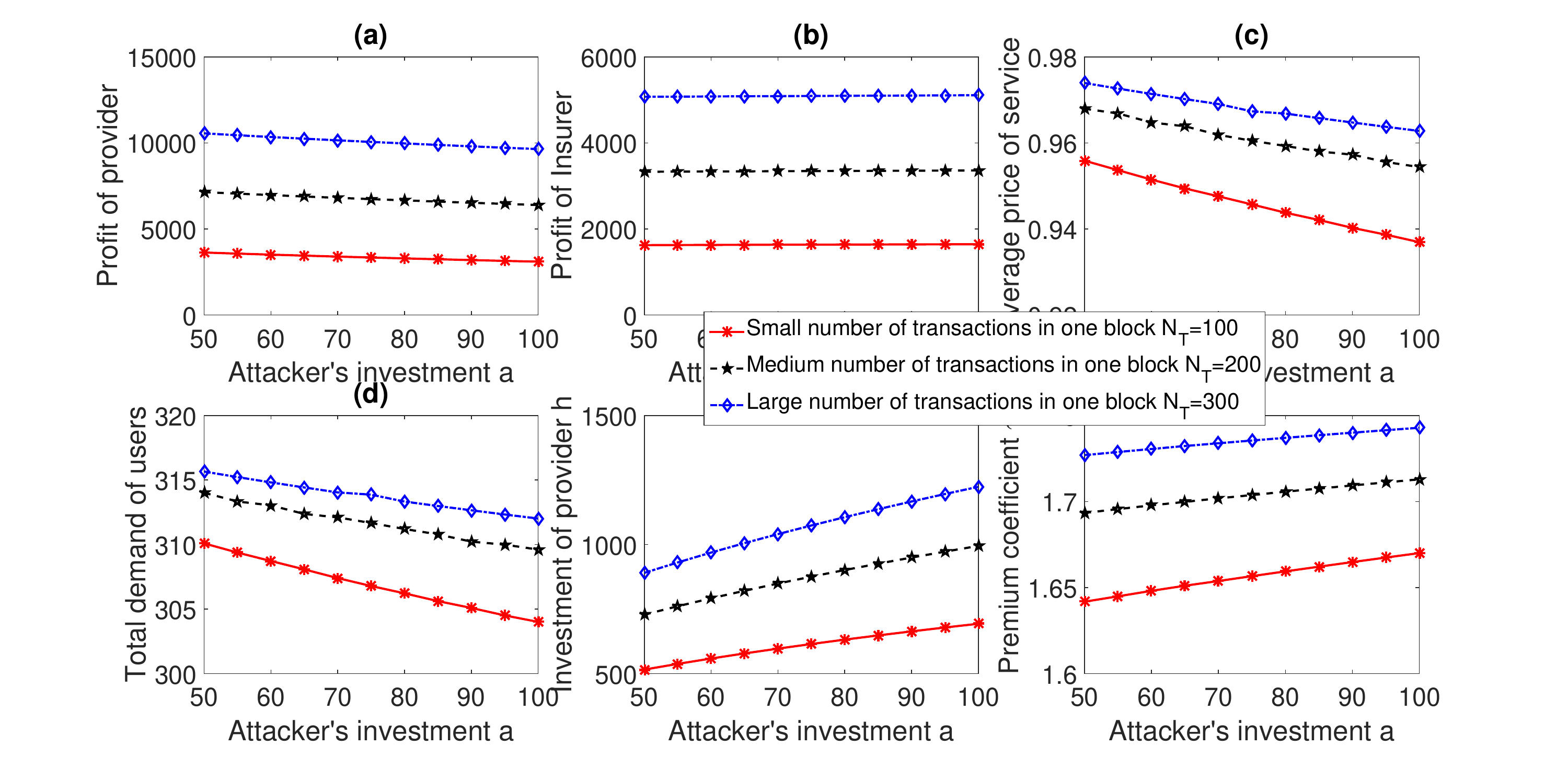}
 \caption{The result with increasing attacker's computing resource.}
 \label{fig:Increasing_attacker_investment}
\end{figure}

Finally, we evaluate in Fig.~\ref{fig:Increasing_attacker_investment} the impact of the attackers' computing resource on the performance of the users, the blockchain provider and the cyber-insurer. We consider three different situations with different sizes of a block, i.e., the number of transactions included in that block, with $N_{\rm{T}}=100$, $N_{\rm{T}}=200$ and $N_{\rm{T}}=300$. We observe from Figs.~\ref{fig:Increasing_attacker_investment}(a) and~(b) that as the attacker's computing resource increases, the profit of the blockchain provider decreases and the profit of the cyber-insurer remains unchanged. The reason is that the blockchain provider needs to increase its infrastructure investment when the attacker's computing resource increases (see Fig.~\ref{fig:Increasing_attacker_investment}(e)). Otherwise, the successful attack probability will significantly increase, and it results in the increase in the cost of the blockchain provider. With an increasing computing resource controlled by the attackers, the investment ratio of the blockchain provider cannot remain as high as before. This result is illustrated in Fig.~\ref{fig:Increasing_attacker_investment}(e) with ${\left. {\frac{{{h^*}}}{{a + {h^*}}}} \right|_{a = 50}} \approx \frac{18}{19} > {\left. {\frac{{{h^*}}}{{a + {h^*}}}} \right|_{a = 100}} \approx \frac{12}{13}$ when $N_{\rm{T}} = 300$. Therefore, as the attackers' computing resource increases, the successful attack probability and consequently the probability of the cyber-insurer paying the claim increases accordingly. Then, as shown in Fig.~\ref{fig:Increasing_attacker_investment}(f), the cyber-insurer increases the premium to keep its profit unchanged. Ultimately, the cost of the blockchain provider also increases. Moreover, as the attackers' computing resource increases, even when the blockchain provider reduces the price of its service to attract more users, the total demand from the users still decreases (see Figs.~\ref{fig:Increasing_attacker_investment}(c) and~(d)). Consequently, this reduces the revenue and profit of the blockchain provider.

Furthermore, we show that the impact of the attackers' computing resource on the other parties in the market is subject to the number of transactions in one block $N_{\rm{T}}$. With the fixed transaction fee and compensation rate, the more transactions in a single block is, the more reward that the blockchain provider obtains from mining a block. Moreover, since the compensation price of one block increases as the number of transactions in one block increases, the more compensation it will pay to the users once the attacks happen. Then, the blockchain provider has more incentive to invest in the computing resource. This is consistent with Fig.~\ref{fig:Increasing_attacker_investment}(e), where the increasing rate of the investment by the blockchain provider is higher with $N_{\rm{T}}=300$ than that with $N_{\rm{T}}=100$. On the other hand, as the attacker's computing resource increases, the increasing rate of the successful attack probability is lower with $N_{\rm{T}}=300$ than that with $N_{\rm{T}}=100$. Consequently, the increasing rate of the premium is lower with $N_{\rm{T}}=300$ than that with $N_{\rm{T}}=100$ (see Fig.~\ref{fig:Increasing_attacker_investment}(f)). Moreover, as $N_{\rm{T}}$ increases,
the increasing rate of the successful attack probability will decrease, and the users' demand will be less affected. This is consistent with Figs.~\ref{fig:Increasing_attacker_investment}(c) and~(d), where the decreasing rates of both the total user demand and the service price shrink as $N_{\rm{T}}$ increases.

\section{Conclusion}
\label{sec:conclusion}
In this paper, we have proposed a risk management framework of the blockchain service market by introducing the cyber-insurance as a mean for protecting financially the blockchain provider from double-spending attacks. We have modeled the interaction among the blockchain provider, the cyber-insurer and the users in the market as a two-stage Stackelberg game. For the blockchain provider, we have considered the problem of balancing between the proactive protection strategy, i.e., investing in computing power, and the reactive protection strategy, i.e., purchasing the cyber-insurance. For the users, we have considered the impact of both the social externality and the service security on the users' valuation of the blockchain service. In particular, for the cyber-insurer, we have incorporated the risk-adjusted pricing mechanism for premium adaptation. We have studied the equilibrium strategies of the three parties in the market using backward induction. We have analytically examined the conditions for the Stackelberg game to exist and to be unique. Furthermore, we have conducted extensive simulations to evaluate the performance of the market entities at the equilibrium. For the future work, we will investigate the long-run competition between the blockchain provider and cyber-insurer.

\ifCLASSOPTIONcaptionsoff
  \newpage
\fi

\bibliography{bibfile}

\newpage

\appendices
\section{Proof of Theorem~\ref{Th:existence_for_leader_game}}
\label{subsec:existence_leader_game}

\begin{proof}
Substituting the second partial derivatives of \begin{footnotesize}$\Pi_{\rm{P}}$\end{footnotesize} in~(\ref{Eq:second_partial_drivative_of_profit}), the Hessian matrix of the blockchain provider's payoff is
\begin{footnotesize}\begin{equation}\label{Eq:hessian_matrix_provider}
{{\bf{H}}_{\rm{P}}} = \left[ {\begin{aligned}
{\frac{{{\partial ^2}{\Pi_{\rm{P}}}}}{{\partial {{\bf{p}}^2}}}}&{\quad\frac{{{\partial ^2}{\Pi_{\rm{P}}}}}{{\partial {\bf{p}}\partial \bar h}}}\\
{\frac{{{\partial ^2}{\Pi_{\rm{P}}}}}{{\partial \bar h\partial {\bf{p}}}}}&{\quad\frac{{{\partial ^2}{\Pi_{\rm{P}}}}}{{\partial {{\bar h}^2}}}}
\end{aligned}} \right] = \left[ {\begin{aligned}
{ - {{\left( {{\bf{I}} - \alpha {\bf{G}}} \right)}^{ - 1}}}&{\quad{{\left( {{\bf{I}} - \alpha {\bf{G}}} \right)}^{ - 1}}{\bf{1}}}\\
{{{\bf{1}}^ \top }{{\left( {{\bf{I}} - \alpha {\bf{G}}} \right)}^{ - 1}}}&{\quad - \frac{{2a}}{{{{\left( {1 - \bar h} \right)}^3}}}}
\end{aligned}} \right].
\end{equation}\end{footnotesize}
We can partition the Hessian matrix in (\ref{Eq:hessian_matrix_provider}) into two matrices as follows:
\begin{equation}\label{Eq:partition_hessian_matrix_provider}
\footnotesize{\begin{aligned}
{{\bf{H}}_{\rm{P}}} &= \underbrace {\left[ {\begin{aligned}
{ - {{\left( {{\bf{I}} - \alpha {\bf{G}}} \right)}^{ - 1}}}&{\quad{{\left( {{\bf{I}} - \alpha {\bf{G}}} \right)}^{ - 1}}{\bf{1}}}\\
{{{\bf{1}}^ \top }{{\left( {{\bf{I}} - \alpha {\bf{G}}} \right)}^{ - 1}}}&{\quad - \frac{a}{{{{\left( {1 - \bar h} \right)}^3}}}}
\end{aligned}} \right]}_{{{\bf{H}}_{{\rm{P}}_1}}} + \underbrace {\left[ {\begin{array}{*{20}{c}}
{\bf{O}}&{\bf{0}}\\
{{{\bf{0}}^ \top }}&{ - \frac{a}{{{{\left( {1 - \bar h} \right)}^3}}}}
\end{array}} \right]}_{{{\bf{H}}_{{\rm{P}}_2}}}\\
&= {{\bf{H}}_{{\rm{P}}_1}} + {{\bf{H}}_{{\rm{P}}_2}},
\end{aligned}}
\end{equation}
where \begin{footnotesize}${{\bf{H}}_{{\rm{P}}_2}}$\end{footnotesize} is obviously a negative semidefinite matrix.

Let ${{\bf{Q}}_{\rm{P}}} = \left[ {\begin{array}{*{20}{c}}
{\bf{I}}&{\bf{1}}\\
{{{\bf{0}}^ \top }}&1
\end{array}} \right]$. The congruent matrix of ${{\bf{H}}_{{\rm{P}}_1}}$ is:
\begin{footnotesize}\begin{equation}\label{Eq:congruent_matrix_of_H_P_1}
{{\bf{Q}}_{\rm{P}}}^\top {{\bf{H}}_{{\rm{P}}_1}} {{\bf{Q}}_{\rm{P}}}=\left[ {\begin{array}{*{20}{c}}
{ - {{\left( {{\bf{I}} - \alpha {\bf{G}}} \right)}^{ - 1}}}&{\bf{0}}\\
{{{\bf{0}}^ \top }}&{ - \frac{a}{{{{\left( {1 - \bar h} \right)}^3}}} + {{\bf{1}}^ \top }{{\left( {{\bf{I}} - \alpha {\bf{G}}} \right)}^{ - 1}}{\bf{1}}}
\end{array}} \right].
\end{equation}\end{footnotesize}
Therefore, the negative definiteness of \begin{footnotesize}${{\bf{H}}_{{\rm{P}}_1}}$\end{footnotesize} depends on the negative definiteness of~(\ref{Eq:congruent_matrix_of_H_P_1}). According to the proof of Theorem~\ref{Th:existent_and_uniqueness_for_user_game}, all the eigenvalues of \begin{footnotesize}${{{\left( {{\bf{I}} - \alpha {\bf{G}}} \right)}^{ - 1}}}$\end{footnotesize} belong to \begin{footnotesize}$\left(0, 1\right)$\end{footnotesize}. This means that \begin{footnotesize}${ - {{\left( {{\bf{I}} - \alpha {\bf{G}}} \right)}^{ - 1}}}$\end{footnotesize} in (\ref{Eq:congruent_matrix_of_H_P_1}) is a negative definite matrix.
We know that the domain of \begin{footnotesize}${\bar{h}}$\end{footnotesize} is \begin{footnotesize}$\left[\frac{1}{2}, 1\right)$\end{footnotesize}, and the inequality equation in~(\ref{eq_new_intermediate}) will be satisfied.
\begin{equation}\label{eq_new_intermediate}
\footnotesize{\begin{aligned}
&- \frac{a}{{{{\left( {1 - \bar h} \right)}^3}}} + {{\bf{1}}^ \top }{\left( {{\bf{I}} - \alpha {\bf{G}}} \right)^{ - 1}}{\bf{1}} <  - \frac{a}{{{{\left( {1 - \frac{1}{2}} \right)}^3}}} + {{\bf{1}}^ \top }{\left( {{\bf{I}} - \alpha {\bf{G}}} \right)^{ - 1}}{\bf{1}} \\
=&  - 8a + {{\bf{1}}^ \top }{\left( {{\bf{I}} - \alpha {\bf{G}}} \right)^{ - 1}}{\bf{1}}.
\end{aligned}}
\end{equation}
\begin{footnotesize}$ - 8a + {{\bf{1}}^ \top }{\left( {{\bf{I}} - \alpha {\bf{G}}} \right)^{ - 1}}{\bf{1}} < 0 $\end{footnotesize} is satisfied according to Theorem~\ref{Th:existence_for_leader_game} and hence the second block in~(\ref{Eq:congruent_matrix_of_H_P_1}), i.e., \begin{footnotesize}$ - \frac{a}{{{{\left( {1 - \bar h} \right)}^3}}} + {{\bf{1}}^ \top }{\left( {{\bf{I}} - \alpha {\bf{G}}} \right)^{ - 1}}{\bf{1}}$\end{footnotesize}, is also negative. With the two negative definite blocks on the diagonal positions and zero blocks on the off-diagonal positions, the matrix in~(\ref{Eq:congruent_matrix_of_H_P_1}) is a negative definite matrix. Therefore, \begin{footnotesize}${{\bf{H}}_{{\rm{P}}_1}}$\end{footnotesize} is a negative definite matrix. Then, as the sum of one negative definite matrix and one negative semidefinite matrix, i.e., \begin{footnotesize}${{\bf{H}}_{{\rm{P}}_1}}$\end{footnotesize} and \begin{footnotesize}${{\bf{H}}_{{\rm{P}}_2}}$\end{footnotesize}, respectively, the Hessian matrix \begin{footnotesize}${{\bf{H}}_{\rm{P}}}$\end{footnotesize} in~(\ref{Eq:hessian_matrix_provider}) is a negative definite matrix. Therefore, \begin{footnotesize}$\Pi_{\rm{P}}$\end{footnotesize} is a strictly concave function of \begin{footnotesize}$\left[{\bf{p}}^\top, {\bar h}\right]^\top$\end{footnotesize}.

In what follows, we prove that $\Pi_{\rm{I}}$ is also a concave function of its own decision variable $\gamma$. Recall the second partial derivative of $\Pi_{\rm{I}}$ on $\gamma$ in~(\ref{Eq:second_partial_drivative_of_profit}) as follows:
\begin{footnotesize}\begin{equation}
\begin{aligned}
\frac{{{\partial ^2}{\Pi _{\rm{I}}}}}{{\partial {\gamma ^2}}} =& \frac{2}{{{\gamma ^3}}}\frac{T}{{{T_{\rm{0}}}}}{N_{\rm{T}}}q\int\nolimits_{1/2}^1 {{{\left[ {1 - \int\nolimits_{1/2}^t {{\rm{P}}\left( \theta  \right){\rm{d}}\theta } } \right]}^{1/\gamma }}\ln \left[ {1 - \int\nolimits_{1/2}^t {{\rm{P}}\left( \theta  \right){\rm{d}}\theta } } \right]{\rm{d}}t} \\
 +& \frac{1}{{{\gamma ^4}}}\frac{T}{{{T_{\rm{0}}}}}{N_{\rm{T}}}q\int\nolimits_{1/2}^1 {{{\left[ {1 - \int\nolimits_{1/2}^t {{\rm{P}}\left( \theta  \right){\rm{d}}\theta } } \right]}^{1/\gamma }}{{\ln }^2}\left[ {1 - \int\nolimits_{1/2}^t {{\rm{P}}\left( \theta  \right){\rm{d}}\theta } } \right]{\rm{d}}t} \\
 -& {\left( {\overline h  - \frac{1}{2}} \right)^3}\left[ {\beta \left( {\beta  + 1} \right){\gamma ^{\beta  - 1}} - \beta \left( {\beta  - 1} \right){\gamma ^{\beta  - 2}}} \right]\\
 =& \frac{1}{{{\gamma ^3}}}\frac{T}{{{T_{\rm{0}}}}}{N_{\rm{T}}}q\int\nolimits_{1/2}^1 {\underbrace {{{\left[ {1 - \int\nolimits_{1/2}^t {{\rm{P}}\left( \theta  \right){\rm{d}}\theta } } \right]}^{1/\gamma }}}_{ > 0}\underbrace {\ln \left[ {1 - \int\nolimits_{1/2}^t {{\rm{P}}\left( \theta  \right){\rm{d}}\theta } } \right]}_{ < 0}{\rm{d}}t}  \\
 +& \frac{1}{{{\gamma ^3}}}\frac{T}{{{T_{\rm{0}}}}}{N_{\rm{T}}}q \int\nolimits_{1/2}^1 {\underbrace {{{\left[ {1 - \int\nolimits_{1/2}^t {{\rm{P}}\left( \theta  \right){\rm{d}}\theta } } \right]}^{1/\gamma }}}_{ > 0}\underbrace {\ln \left[ {1 - \int\nolimits_{1/2}^t {{\rm{P}}\left( \theta  \right){\rm{d}}\theta } } \right]}_{ < 0}}\\
 \times&{\left\{ {1 + \frac{1}{\gamma }\ln \left[ {1 - \int\nolimits_{1/2}^t {{\rm{P}}\left( \theta  \right){\rm{d}}\theta } } \right]} \right\}{\rm{d}}t} \\
 -& \underbrace {{{\left( {\overline h  - \frac{1}{2}} \right)}^3}}_{ > 0}\underbrace {\left[ {\left( {\beta  + 1} \right)\gamma  - \left( {\beta  - 1} \right)} \right]}_{ > 0}\beta {\gamma ^{\beta  - 2}}.
\end{aligned}
\end{equation}\end{footnotesize}
To ensure the concavity of ${\Pi _{\rm{I}}}$ with respect to \begin{footnotesize}$\gamma$\end{footnotesize}, we need the following inequality to be satisfied:
\begin{footnotesize}\begin{equation}
  \label{eq_proof_intermediate2}
  {\left\{ {1 + \frac{1}{\gamma }{\ln}\left[ {1 - \int\nolimits_{1/2}^t {{\rm{P}}\left( \theta  \right){\rm{d}}\theta } } \right]} \right\}} \ge 0.
\end{equation}\end{footnotesize}
From (\ref{eq_proof_intermediate2}), we have
\begin{footnotesize}\begin{equation}
  \label{eq_intermediate_inequality}
\begin{aligned}
&\left\{ {1 + \frac{1}{\gamma }\ln \left[ {1 - \int_{1/2}^t {{\rm{P}}\left( \theta  \right){\rm{d}}\theta } } \right]} \right\} \ge 0\\
\Leftrightarrow& \frac{1}{\gamma }\ln \left[ {1 - \int_{1/2}^t {{\rm{P}}\left( \theta  \right){\rm{d}}\theta } } \right] \ge  - 1\\
\Leftrightarrow& \frac{1}{\gamma }\ln \left[ {1 - \int_{1/2}^t {{\rm{P}}\left( \theta  \right){\rm{d}}\theta } } \right] \\
&\ge \frac{1}{\gamma }\ln \left[ {\int_0^1 {{\rm{P}}\left( \theta  \right){\rm{d}}\theta }  - \int_{1/2}^1 {{\rm{P}}\left( \theta  \right){\rm{d}}\theta } } \right] \ge  - 1\\
\Rightarrow& \frac{1}{\gamma }\ln \left[ {\int_0^{1/2} {{\rm{P}}\left( \theta  \right){\rm{d}}\theta } } \right] \ge  - 1 \Leftrightarrow \frac{1}{\gamma }\ln \frac{1}{2} \ge  - 1 \Leftrightarrow \gamma  \ge \ln 2
\end{aligned}
\end{equation}\end{footnotesize}
From (\ref{eq_intermediate_inequality}) we learn that the sufficient condition to ensure the inequality in (\ref{eq_proof_intermediate2}) is \begin{footnotesize}$\gamma  \ge \ln 2$\end{footnotesize}. Such a condition holds on  \begin{footnotesize}$\left(1, \gamma^{\rm{u}}\right]$\end{footnotesize}. Therefore, \begin{footnotesize}$\Pi_{\rm{I}}$\end{footnotesize} is a concave function on \begin{footnotesize}$\gamma$\end{footnotesize}. Then, the profit functions of the blockchain provider and the cyber-insurer, i.e., \begin{footnotesize}$\Pi_{\rm{P}}$\end{footnotesize} and \begin{footnotesize}$\Pi_{\rm{I}}$\end{footnotesize}, are concave with respect to their own decision variables, i.e., \begin{footnotesize}$\left[{\bf{p}}^\top, {\bar h}\right]^\top$\end{footnotesize} and \begin{footnotesize}$\gamma$\end{footnotesize}, respectively. According to~\cite{debreu1952social}, the NE in the leader-level game \begin{footnotesize}${\cal{G}}_{\rm{L}}$\end{footnotesize} exists hence the Stackbelberg equilibrium exists. The proof is completed.
\end{proof}

\section{Proof of Theorem~\ref{Th:uniqueness_for_leader_game}}
\label{subsec:uniqueness_leader_game}

\begin{figure*}
\begin{footnotesize}
\begin{equation}\label{Eq:jacobian_matrix_leader_game}
\begin{aligned}
{\bf{J}}=&\left[ {\begin{aligned}
{\frac{{{\partial ^2}{\Pi _p}}}{{\partial {{\bf{p}}^2}}}}&{\quad\frac{{{\partial ^2}{\Pi _p}}}{{\partial {\bf{p}}\partial \bar h}}}&{\frac{{{\partial ^2}{\Pi _p}}}{{\partial {\bf{p}}\partial \gamma }}}\\
{\frac{{{\partial ^2}{\Pi _p}}}{{\partial \bar h\partial {\bf{p}}}}}&{\quad\frac{{{\partial ^2}{\Pi _p}}}{{\partial {{\bar h}^2}}}}&{\frac{{{\partial ^2}{\Pi _p}}}{{\partial \bar h\partial \gamma }}}\\
{\frac{{{\partial ^2}{\Pi _I}}}{{\partial \gamma \partial {\bf{p}}}}}&{\quad\frac{{{\partial ^2}{\Pi _I}}}{{\partial \gamma \partial \bar h}}}&{\frac{{{\partial ^2}{\Pi _I}}}{{\partial {\gamma ^2}}}}
\end{aligned}} \right] + {\left[ {\begin{aligned}
{\frac{{{\partial ^2}{\Pi _p}}}{{\partial {{\bf{p}}^2}}}}&{\quad\frac{{{\partial ^2}{\Pi _p}}}{{\partial {\bf{p}}\partial \bar h}}}&{\frac{{{\partial ^2}{\Pi _p}}}{{\partial {\bf{p}}\partial \gamma }}}\\
{\frac{{{\partial ^2}{\Pi _p}}}{{\partial \bar h\partial {\bf{p}}}}}&{\quad\frac{{{\partial ^2}{\Pi _p}}}{{\partial {{\bar h}^2}}}}&{\frac{{{\partial ^2}{\Pi _p}}}{{\partial \bar h\partial \gamma }}}\\
{\frac{{{\partial ^2}{\Pi _I}}}{{\partial \gamma \partial {\bf{p}}}}}&{\quad\frac{{{\partial ^2}{\Pi _I}}}{{\partial \gamma \partial \bar h}}}&{\frac{{{\partial ^2}{\Pi _I}}}{{\partial {\gamma ^2}}}}
\end{aligned}} \right]^\top}
 = \left[ {\begin{array}{*{20}{c}}
{2\frac{{{\partial ^2}{\Pi _p}}}{{\partial {{\bf{p}}^2}}}}&{\frac{{{\partial ^2}{\Pi _p}}}{{\partial {\bf{p}}\partial \bar h}} + {{\left( {\frac{{{\partial ^2}{\Pi _p}}}{{\partial \bar h\partial {\bf{p}}}}} \right)}^\top}}&{\frac{{{\partial ^2}{\Pi _p}}}{{\partial {\bf{p}}\partial \gamma }} + {{\left( {\frac{{{\partial ^2}{\Pi _I}}}{{\partial \gamma \partial {\bf{p}}}}} \right)}^\top}}\\
{\frac{{{\partial ^2}{\Pi _p}}}{{\partial \bar h\partial {\bf{p}}}} + {{\left( {\frac{{{\partial ^2}{\Pi _p}}}{{\partial {\bf{p}}\partial \bar h}}} \right)}^\top}}&{2\frac{{{\partial ^2}{\Pi _p}}}{{\partial {{\bar h}^2}}}}&{\frac{{{\partial ^2}{\Pi _p}}}{{\partial \bar h\partial \gamma }} + \frac{{{\partial ^2}{\Pi _I}}}{{\partial \gamma \partial \bar h}}}\\
{\frac{{{\partial ^2}{\Pi _I}}}{{\partial \gamma \partial {\bf{p}}}} + {{\left( {\frac{{{\partial ^2}{\Pi _p}}}{{\partial {\bf{p}}\partial \bar h}}} \right)}^\top}}&{\frac{{{\partial ^2}{\Pi _I}}}{{\partial \gamma \partial \bar h}} + \frac{{{\partial ^2}{\Pi _p}}}{{\partial \bar h\partial \gamma }}}&{2\frac{{{\partial ^2}{\Pi _I}}}{{\partial {\gamma ^2}}}}
\end{array}} \right]
\end{aligned}
\end{equation}
\end{footnotesize}
\begin{footnotesize}
\begin{equation}\label{Eq:specifical_jacobian_matrix_leader_game}
 =  \left[ {\begin{array}{*{20}{c}}
{ - 2{{\left( {{\bf{I}} - \alpha {\bf{G}}} \right)}^{ - 1}}}&{2{{\left( {{\bf{I}} - \alpha {\bf{G}}} \right)}^{ - 1}}{\bf{1}}}&{\bf{0}}\\
{2{{\bf{1}}^ \top }{{\left( {{\bf{I}} - \alpha {\bf{G}}} \right)}^{ - 1}}}&{ - \frac{{4a}}{{{{\left( {1 - \bar h} \right)}^3}}}}&\begin{aligned}
 &- 3{\left( {\overline h  - \frac{1}{2}} \right)^2}\\
 \times& \left[ {\left( {\beta  + 1} \right){\gamma ^\beta } - \beta {\gamma ^{\beta  - 1}}} \right]
\end{aligned}\\
{{{\bf{0}}^ \top }}&\begin{aligned}
 &- 3{\left( {\overline h  - \frac{1}{2}} \right)^2}\\
 \times &\left[ {\left( {\beta  + 1} \right){\gamma ^\beta } - \beta {\gamma ^{\beta  - 1}}} \right]
\end{aligned}&\begin{aligned}
&\frac{4}{{{\gamma ^3}}}\frac{T}{{{T_{\rm{0}}}}}{N_{\rm{T}}}q\int\nolimits_{1/2}^1 {{{\left[ {1 - \int\nolimits_{1/2}^t {{\rm{P}}\left( \theta  \right){\rm{d}}\theta } } \right]}^{1/\gamma }}\ln \left[ {1 - \int\nolimits_{1/2}^t {{\rm{P}}\left( \theta  \right){\rm{d}}\theta } } \right]{\rm{d}}t} \\
 +& \frac{2}{{{\gamma ^4}}}\frac{T}{{{T_{\rm{0}}}}}{N_{\rm{T}}}q\int\nolimits_{1/2}^1 {{{\left[ {1 - \int\nolimits_{1/2}^t {{\rm{P}}\left( \theta  \right){\rm{d}}\theta } } \right]}^{1/\gamma }}\ln \left[ {1 - \int\nolimits_{1/2}^t {{\rm{P}}\left( \theta  \right){\rm{d}}\theta } } \right]{\rm{d}}t} \\
 -& 2{\left( {\overline h  - \frac{1}{2}} \right)^3}\left[ {\beta \left( {\beta  + 1} \right){\gamma ^{\beta  - 1}} - \beta \left( {\beta  - 1} \right){\gamma ^{\beta  - 2}}} \right]
\end{aligned}
\end{array}} \right].
\end{equation}
\end{footnotesize}
\end{figure*}

\begin{proof}
According to Theorem 3.1 (Rosen's Theorem) in~\cite{peng2009summary}, the uniqueness of the NE in \begin{footnotesize}${\cal{G}}_{\rm{L}}$\end{footnotesize} holds if the Jacobian matrix constructed from the profit functions of the two players, i.e., \begin{footnotesize}$\mathbf{J}$\end{footnotesize} given in (\ref{Eq:jacobian_matrix_leader_game}), is negative definite. After substituting the relevant second partial derivatives given in~(\ref{Eq:second_partial_drivative_of_profit}) into (\ref{Eq:jacobian_matrix_leader_game}), we are able to rewrite (\ref{Eq:jacobian_matrix_leader_game}) as in (\ref{Eq:specifical_jacobian_matrix_leader_game}).

\begin{figure*}
\begin{footnotesize}\begin{equation}\label{Eq:partition_jacobian_matrix_leader_game}
\begin{aligned}
{\bf{J}} =&
\begin{small}\underbrace {\left[ {\begin{array}{*{20}{c}}
{ - 2{{\left( {{\bf{I}} - \alpha {\bf{G}}} \right)}^{ - 1}}}&{2{{\left( {{\bf{I}} - \alpha {\bf{G}}} \right)}^{ - 1}}{\bf{1}}}&{\bf{0}}\\
{2{{\bf{1}}^ \top }{{\left( {{\bf{I}} - \alpha {\bf{G}}} \right)}^{ - 1}}}&{ - \frac{{2a}}{{{{\left( {1 - \bar h} \right)}^3}}}}&0\\
{{{\bf{0}}^ \top }}&0&\begin{aligned}
&\frac{4}{{{\gamma ^3}}}\frac{T}{{{T_{\rm{0}}}}}{N_{\rm{T}}}q\int\nolimits_{1/2}^1 {{{\left[ {1 - \int\nolimits_{1/2}^t {{\rm{P}}\left( \theta  \right){\rm{d}}\theta } } \right]}^{1/\gamma }}\ln \left[ {1 - \int\nolimits_{1/2}^t {{\rm{P}}\left( \theta  \right){\rm{d}}\theta } } \right]{\rm{d}}t} \\
 +& \frac{2}{{{\gamma ^4}}}\frac{T}{{{T_{\rm{0}}}}}{N_{\rm{T}}}q\int\nolimits_{1/2}^1 {{{\left[ {1 - \int\nolimits_{1/2}^t {{\rm{P}}\left( \theta  \right){\rm{d}}\theta } } \right]}^{1/\gamma }}\ln \left[ {1 - \int\nolimits_{1/2}^t {{\rm{P}}\left( \theta  \right){\rm{d}}\theta } } \right]{\rm{d}}t}
\end{aligned}
\end{array}} \right]}_{{{\bf{J}}_1}}\end{small}\\
 &+ \underbrace {\left[ {\begin{array}{*{20}{c}}
{\bf{O}}&{\bf{0}}&{\bf{0}}\\
{{{\bf{0}}^ \top }}&{ - \frac{{2a}}{{{{\left( {1 - \bar h} \right)}^3}}}}&\begin{aligned}
 &- 3{\left( {\overline h  - \frac{1}{2}} \right)^2}\\
 \times& \left[ {\left( {\beta  + 1} \right){\gamma ^\beta } - \beta {\gamma ^{\beta  - 1}}} \right]
\end{aligned}\\
{{{\bf{0}}^ \top }}&\begin{aligned}
 &- 3{\left( {\overline h  - \frac{1}{2}} \right)^2}\\
 \times& \left[ {\left( {\beta  + 1} \right){\gamma ^\beta } - \beta {\gamma ^{\beta  - 1}}} \right]
\end{aligned}&{ - {{2\left( {\overline h  - \frac{1}{2}} \right)}^3}\left[ {\beta \left( {\beta  + 1} \right){\gamma ^{\beta  - 1}} - \beta \left( {\beta  - 1} \right){\gamma ^{\beta  - 2}}} \right]}
\end{array}} \right]}_{{{\bf{J}}_2}}.
\end{aligned}
\end{equation}\end{footnotesize}
\begin{footnotesize}\begin{equation}\label{Eq:congruent_matrix_J_2}
\begin{aligned}
{{{\bf{R}}_1}{{\bf{J}}_2}{{\bf{R}}_1}^ \top  = }
\left[ {\begin{array}{*{20}{c}}
{\bf{O}}&{\bf{0}}&{\bf{0}}\\
{{{\bf{0}}^ \top }}&\begin{aligned}
 &- \frac{{2a}}{{{{\left( {1 - \bar h} \right)}^3}}}\\
 +& \frac{{9{{\left( {\overline h  - \frac{1}{2}} \right)}^4}{{\left[ {\left( {\beta  + 1} \right){\gamma ^\beta } - \beta {\gamma ^{\beta  - 1}}} \right]}^2}}}{{{{2\left( {\overline h  - \frac{1}{2}} \right)}^3}\left[ {\beta \left( {\beta  + 1} \right){\gamma ^{\beta  - 1}} - \beta \left( {\beta  - 1} \right){\gamma ^{\beta  - 2}}} \right]}}
\end{aligned}&0\\
{{{\bf{0}}^ \top }}&0&{ - {{2\left( {\overline h  - \frac{1}{2}} \right)}^3}\left[ {\left( {\beta  + 1} \right)\gamma  - \left( {\beta  - 1} \right)} \right]\beta {\gamma ^{\beta  - 2}}}
\end{array}} \right].
\end{aligned}
\end{equation}\end{footnotesize}
\end{figure*}

Then, we partition the Jacobian matrix~(\ref{Eq:jacobian_matrix_leader_game}) into two matrices as shown in~(\ref{Eq:partition_jacobian_matrix_leader_game}). In \begin{footnotesize}${\bf{J}}_1$\end{footnotesize} given by (\ref{Eq:partition_jacobian_matrix_leader_game}), we have \begin{footnotesize}$\left[ {\begin{array}{*{20}{c}}
{ - 2{{\left( {{\bf{I}} - \alpha {\bf{G}}} \right)}^{ - 1}}}&{2{{\left( {{\bf{I}} - \alpha {\bf{G}}} \right)}^{ - 1}}{\bf{1}}}\\
{2{{\bf{1}}^ \top }{{\left( {{\bf{I}} - \alpha {\bf{G}}} \right)}^{ - 1}}}&{ - \frac{{2a}}{{{{\left( {1 - \bar h} \right)}^3}}}}
\end{array}} \right] = 2{{\bf{H}}_{{\rm{P}}_1}}$\end{footnotesize}, where \begin{footnotesize}${{\bf{H}}_{{\rm{P}}_1}}$\end{footnotesize} is the matrix in~(\ref{Eq:partition_hessian_matrix_provider}). As shown in the proof to Theorem~\ref{Th:existence_for_leader_game}, \begin{footnotesize}${{\bf{H}}_{{\rm{P}}_1}}$\end{footnotesize} is a negative definite matrix, and thus the first block in the matrix \begin{footnotesize}${\bf{J}}_1$\end{footnotesize} of~(\ref{Eq:partition_jacobian_matrix_leader_game}), i.e., \begin{footnotesize}$\left[ {\begin{array}{*{20}{c}}
{ - 2{{\left( {{\bf{I}} - \alpha {\bf{G}}} \right)}^{ - 1}}}&{2{{\left( {{\bf{I}} - \alpha {\bf{G}}} \right)}^{ - 1}}{\bf{1}}}\\
{2{{\bf{1}}^ \top }{{\left( {{\bf{I}} - \alpha {\bf{G}}} \right)}^{ - 1}}}&{ - \frac{{2a}}{{{{\left( {1 - \bar h} \right)}^3}}}}
\end{array}} \right]$\end{footnotesize}, is a negative definite matrix. For the second non-zero block in \begin{footnotesize}${\bf{J}}_1$\end{footnotesize}, the condition to ensure negative definiteness is
\begin{footnotesize}\begin{equation}
  \label{eq_intermediate_negative}
\begin{aligned}
0&>\frac{4}{{{\gamma ^3}}}\frac{T}{{{T_{\rm{0}}}}}{N_{\rm{T}}}q\int\nolimits_{1/2}^1 {{{\left[ {1 - \int\nolimits_{1/2}^t {{\rm{P}}\left( \theta  \right){\rm{d}}\theta } } \right]}^{1/\gamma }}{\rm{ln}}\left[ {1 - \int\nolimits_{1/2}^t {{\rm{P}}\left( \theta  \right){\rm{d}}\theta } } \right]{\rm{d}}t}  \\
&+ \frac{2}{{{\gamma ^4}}}\frac{T}{{{T_{\rm{0}}}}}{N_{\rm{T}}}q\int\nolimits_{1/2}^1 {{{\left[ {1 - \int\nolimits_{1/2}^t {{\rm{P}}\left( \theta  \right){\rm{d}}\theta } } \right]}^{1/\gamma }}{\rm{l}}{{\rm{n}}^2}\left[ {1 - \int\nolimits_{1/2}^t {{\rm{P}}\left( \theta  \right){\rm{d}}\theta } } \right]{\rm{d}}t} \\
 =& \frac{2}{{{\gamma ^3}}}\frac{T}{{{T_{\rm{0}}}}}{N_{\rm{T}}}q\int\nolimits_{1/2}^1 {{{\left[ {1 - \int\nolimits_{1/2}^t {{\rm{P}}\left( \theta  \right){\rm{d}}\theta } } \right]}^{1/\gamma }}{\rm{ln}}\left[ {1 - \int\nolimits_{1/2}^t {{\rm{P}}\left( \theta  \right){\rm{d}}\theta } } \right]{\rm{d}}t}\\
 &+\frac{2}{{{\gamma ^3}}}\frac{T}{{{T_{\rm{0}}}}}{N_{\rm{T}}}q\int\nolimits_{1/2}^1 {{\left[ {1 - \int\nolimits_{1/2}^t {{\rm{P}}\left( \theta  \right){\rm{d}}\theta } } \right]}^{1/\gamma }}\\
 &\times{\rm{ln}}\left[ {1 - \int\nolimits_{1/2}^t {{\rm{P}}\left( \theta  \right){\rm{d}}\theta } } \right]\left\{ {1 + \frac{1}{\gamma }\left[ {1 - \int\nolimits_{1/2}^t {{\rm{P}}\left( \theta  \right){\rm{d}}\theta } } \right]} \right\}{\rm{d}}t.
\end{aligned}
\end{equation}\end{footnotesize}
Following the same procedure in the proof to Theorem~\ref{Th:existence_for_leader_game}, from (\ref{eq_intermediate_negative}) we obtain the condition of negative definiteness as \begin{footnotesize}$\gamma \ge \ln2$\end{footnotesize}. Therefore, with the two negative definite blocks on the diagonal positions and all-zero blocks on the off-diagonal positions, \begin{footnotesize}${\bf{J}}_1$\end{footnotesize} in (\ref{Eq:partition_jacobian_matrix_leader_game}) is a negative definite matrix.

After defining
\begin{equation}
\footnotesize{\begin{aligned}
&{{\bf{R}}_1} =\\
&\left[ {\begin{array}{*{20}{c}}
{\bf{I}}&{\bf{0}}&{\bf{0}}\\
{{{\bf{0}}^ \top }}&1&\begin{aligned}
 &- 3{\left( {\overline h  - \frac{1}{2}} \right)^2}\left[ {\left( {\beta  + 1} \right){\gamma ^\beta } - \beta {\gamma ^{\beta  - 1}}} \right]\\
 \times &\left\{  - {{2\left( {\overline h  - \frac{1}{2}} \right)}^3}\right\}^{ - 1}\\
 \times&\left[ {\beta \left( {\beta  + 1} \right){\gamma ^{\beta  - 1}} - \beta \left( {\beta  - 1} \right){\gamma ^{\beta  - 2}}} \right]^{ - 1}
\end{aligned}\\
{{{\bf{0}}^ \top }}&0&1
\end{array}} \right],
\end{aligned}}
\end{equation}
we can obtain the congruent matrix of \begin{footnotesize}${\bf{J}}_2$\end{footnotesize} in~(\ref{Eq:partition_jacobian_matrix_leader_game}) as shown in~(\ref{Eq:congruent_matrix_J_2}). In (\ref{Eq:congruent_matrix_J_2}), we observe that \begin{footnotesize}${ - {{2\left( {\overline h  - \frac{1}{2}} \right)}^3}\left[ {\left( {\beta  + 1} \right)\gamma  - \left( {\beta  - 1} \right)} \right]\beta {\gamma ^{\beta  - 2}}}<0$\end{footnotesize}. Then, to ensure the negative semidefiniteness of \begin{footnotesize}${\bf{J}}_2$\end{footnotesize}, we need the following inequality to hold:
\begin{footnotesize}\begin{equation}
  \label{eq_intermediate_bound}
  - \frac{{2a}}{{{{\left( {1 - \bar h} \right)}^3}}} + \frac{{9{{\left( {\overline h  - \frac{1}{2}} \right)}^4}{{\left[ {\left( {\beta  + 1} \right){\gamma ^\beta } - \beta {\gamma ^{\beta  - 1}}} \right]}^2}}}{{{{2\left( {\overline h  - \frac{1}{2}} \right)}^3}\left[ {\beta \left( {\beta  + 1} \right){\gamma ^{\beta  - 1}} - \beta \left( {\beta  - 1} \right){\gamma ^{\beta  - 2}}} \right]}} \le 0.
\end{equation}\end{footnotesize}
Then, we can determine the upper bound for (\ref{eq_intermediate_bound}) as follows:
\begin{footnotesize}\begin{equation}
\label{eq:upper_bound}
\begin{aligned}
 &- \frac{{2a}}{{{{\left( {1 - \bar h} \right)}^3}}} + \frac{{9{{\left( {\bar h - \frac{1}{2}} \right)}^4}{{\left[ {\left( {\beta  + 1} \right){\gamma ^\beta } - \beta {\gamma ^{\beta  - 1}}} \right]}^2}}}{{2{{\left( {\bar h - \frac{1}{2}} \right)}^3}\left[ {\beta \left( {\beta  + 1} \right){\gamma ^{\beta  - 1}} - \beta \left( {\beta  - 1} \right){\gamma ^{\beta  - 2}}} \right]}}\\
 =&  - \frac{{2a}}{{{{\left( {1 - \bar h} \right)}^3}}} + \frac{{9\left( {\bar h - \frac{1}{2}} \right){{\left[ {\left( {\beta  + 1} \right)\gamma  - \beta } \right]}^2}{\gamma ^\beta }}}{{2\left[ {\beta \left( {\beta  + 1} \right)\gamma  - \beta \left( {\beta  - 1} \right)} \right]}}\\
 <&  - \frac{{2a}}{{{{\left( {1 - \bar h} \right)}^3}}} + \frac{{9\left( {\bar h - \frac{1}{2}} \right){{\left( {\beta  + 1} \right)}^2}{\gamma ^{\beta  + 2}}}}{{2\left[ {\beta \left( {\beta  + 1} \right)\gamma  - \beta \left( {\beta  - 1} \right)} \right]}}\\
 <&  - \frac{{2a}}{{{{\left( {1 - \bar h} \right)}^3}}} + \frac{{9\left( {\bar h - \frac{1}{2}} \right){{\left( {\beta  + 1} \right)}^2}{\gamma ^{\beta  + 2}}}}{{2\left[ {\beta \left( {\beta  + 1} \right)\gamma  - \beta \left( {\beta  - 1} \right)\gamma } \right]}}\\
 =&  - \frac{{2a}}{{{{\left( {1 - \bar h} \right)}^3}}} + \frac{{9\left( {\bar h - \frac{1}{2}} \right){{\left( {\beta  + 1} \right)}^2}{\gamma ^{\beta  + 2}}}}{{2\beta \gamma \left( {\beta  + 1 - \beta  + 1} \right)}}\\
 =&  - \frac{{2a}}{{{{\left( {1 - \bar h} \right)}^3}}} + \frac{{9\left( {\bar h - \frac{1}{2}} \right){{\left( {\beta  + 1} \right)}^2}{\gamma ^{\beta  + 1}}}}{{4\beta }}
 \end{aligned}
 \end{equation}\end{footnotesize}

Since \begin{footnotesize}${\bar{h}} \in \left[\frac{1}{2}, 1\right)$\end{footnotesize}, the last equation in~(\ref{eq:upper_bound}), i.e., \begin{footnotesize}$- \frac{{2a}}{{{{\left( {1 - \bar h} \right)}^3}}} + \frac{{9\left( {\bar h - \frac{1}{2}} \right){{\left( {\beta  + 1} \right)}^2}{\gamma ^{\beta  + 1}}}}{{4\beta }}$\end{footnotesize}, is smaller than \begin{footnotesize}$  - \frac{{2a}}{{{{\left( {1 - \frac{1}{2}} \right)}^3}}} + \frac{{9\left( {1 - \frac{1}{2}} \right){{\left( {\beta  + 1} \right)}^2}{\gamma ^{\beta  + 1}}}}{{4\beta }}=  - 16a + \frac{{9{{\left( {\beta  + 1} \right)}^2}{{\left( {{\gamma ^u}} \right)}^{\beta  + 1}}}}{{8\beta }}$\end{footnotesize}. \begin{footnotesize}$- 16a + \frac{{9{{\left( {\beta  + 1} \right)}^2}{{\left( {{\gamma ^u}} \right)}^{\beta  + 1}}}}{{8\beta }} < 0$\end{footnotesize} is satisfied according to Theorem~\ref{Th:uniqueness_for_leader_game} and hence the matrix in~(\ref{Eq:congruent_matrix_J_2}) has zero and negative definite blocks on its diagonal positions and zero blocks on the off-diagonal positions. This means that the matrix~(\ref{Eq:congruent_matrix_J_2}) is a negative semi-definite matrix. Since \begin{footnotesize}${\bf{J}}_2$\end{footnotesize} is congruent with the matrix~(\ref{Eq:congruent_matrix_J_2}), \begin{footnotesize}${\bf{J}}_2$\end{footnotesize} is also a negative semidefinite matrix.

To summarize, as the sum of a negative definite matrix \begin{footnotesize}${\bf{J}}_1$\end{footnotesize} and a negative semidefinite matrix \begin{footnotesize}${\bf{J}}_2$\end{footnotesize}, the Jacobian matrix in~(\ref{Eq:jacobian_matrix_leader_game}), i.e., \begin{footnotesize}${\bf{J}}$\end{footnotesize}, is a negative definite matrix. Thus, the NE in the leader-level noncooperative subgame \begin{footnotesize}${\cal{G}}_{\rm{L}}$\end{footnotesize} is unique and the Stackbelberg equilibrium is unique. The proof is completed.
\end{proof}

\end{document}